\documentclass{article}

\usepackage[top=1in, bottom=1in, left=1in, right=1in]{geometry}

\usepackage{amsmath,amssymb}
\usepackage{mathtools}

\usepackage{thmtools}

\usepackage{graphicx}
\usepackage{float}
\usepackage{booktabs}
\usepackage{subcaption}
\usepackage{tikz-cd}

\usepackage{natbib}
\usepackage{url}
\usepackage{hyperref}

\usepackage{authblk}

\usepackage{xcolor}
\usepackage[colorinlistoftodos]{todonotes}
\usepackage{comment}
\usepackage{verbatim}

\usepackage{amsthm}
\newtheorem{theorem}{Theorem}
\newtheorem{proposition}[theorem]{Proposition}
\newtheorem{lemma}[theorem]{Lemma}
\newtheorem{corollary}[theorem]{Corollary}

\theoremstyle{definition}
\newtheorem{definition}[theorem]{Definition}

\theoremstyle{remark}
\newtheorem{remark}[theorem]{Remark}

\theoremstyle{plain}
\newtheorem*{maintheorem}{Main Theorem}

\title{Linear Self-Symmetries of Inverse-Based S-boxes: Generic Rigidity and Applications}

\author{
Zheng Zhang \thanks{Corresponding author. Department of Mathematics, Towson University, 7800 York Rd, Towson, MD 21204, USA. Email: \texttt{zhengzhang@towson.edu}.}
Na Zhang \thanks{Department of Mathematics, Towson University, 7800 York Rd, Towson, MD 21204, USA. Email: \texttt{nzhang@towson.edu}.}
}

\date{}

\begin{document}

\maketitle
\begin{abstract}
Finite-field inversion is a basic nonlinear building block in the
design of S-boxes for symmetric cryptography. In inverse-based S-boxes, finite-field inversion is often combined with
linear or affine layers that may preserve or destroy algebraic symmetries of
the inverse function. We study how often nontrivial linear
self-symmetries remain as the outer invertible linear transformation
varies over \(GL_n(\mathbb F_2)\). Using the classification of
two-sided linear equivalences of inversion, we reduce the problem to
semilinear conjugacy and centralizer estimates. We obtain matching
upper and lower bounds for the probability that a nontrivial linear
self-symmetry remains, and these bounds determine its leading
quadratic exponent. This exponent is governed by the smallest prime divisor of the extension
degree. Thus the decay rate depends on the arithmetic of the field
extension and not only on its dimension. As cryptographic applications, we provide a practical rigidity
criterion, prove that the full AES S-box has no nontrivial linear
self-symmetry, and illustrate the broader relevance of the framework
through ARIA and Vision.
\end{abstract}

Keywords: finite-field inversion, linear stabilizers, semilinear conjugacy,
centralizers, inverse-based S-boxes.

\section{Introduction}
Finite-field inversion on $\mathbb{F}_{2^n}$ is one of the best-known
nonlinear maps used in the design of S-boxes for symmetric-key
cryptography. Its algebraic structure is also highly symmetric. If
\[
J(x)=x^{-1},\qquad J(0)=0,
\]
then \(J\) commutes with every Frobenius automorphism
\(x\mapsto x^{2^r}\). Thus inversion comes with a natural family of
nontrivial linear symmetries.

In inverse-based S-box constructions, inversion is usually
combined with an outer linear or affine layer, and such a layer may
preserve or destroy symmetries of the inverse function. This raises a natural
structural question: when the outer linear layer varies, how often does
any nontrivial symmetry of inversion survive? 

To make the question precise, for a map
\(\Phi:\mathbb F_{2^n}\to\mathbb F_{2^n}\), define
\[
\operatorname{Stab}(\Phi)
=
\{T\in GL_n(\mathbb F_2):T\circ\Phi=\Phi\circ T\}.
\]
A nonidentity element of \(\operatorname{Stab}(\Phi)\) is a nontrivial
linear self-symmetry of \(\Phi\), and we call \(\Phi\) rigid when
\(\operatorname{Stab}(\Phi)=\{I\}\).
\begin{remark}
We emphasize that this notion is different from linear
self-equivalence. A linear self-equivalence allows a pair of linear
maps acting separately on the input and output, whereas a linear
self-symmetry in the sense above requires the same linear map to
commute with \(\Phi\).
\end{remark}

We study the family
\[
\Phi_A=A\circ J,
\qquad A\in GL_n(\mathbb F_2),
\]
and measure the frequency with which \(\Phi_A\) retains a nontrivial
linear self-symmetry.

Self-symmetries are especially natural in iterative
constructions. When the same linear transformation is compatible with
the S-box, the diffusion layer, and the round constants, it can
propagate from one round to the next without changing form. Such
self-symmetries therefore represent algebraic structure that can
persist through repeated applications of the round function and may
give rise to invariant state spaces or other non-random structure~\cite{leander2011invariant}.
Understanding when such self-symmetries survive is therefore relevant
to both the design and the structural analysis of inverse-based
S-boxes.

This setting arises naturally in several inverse-based cryptographic
constructions. The AES S-box is the best-known example, where inversion
over \(\mathbb F_{2^8}\) is followed by a fixed affine layer
\cite{fips197}. Related inverse-based substitution layers also appear
in ARIA~\cite{lee2010aria}, while more recent designs such as Vision
use inversion together with sparse linearized polynomials over binary
fields~\cite{aly2020marvellous}. These examples show that the interaction
between finite-field inversion and surrounding linear structure is a
recurring feature of inverse-based cryptographic design.

This problem is related to several lines of work on symmetry,
equivalence, and self-equivalence of cryptographic functions.
Rijmen, Barreto, and Gazzoni Filho~\cite{rijmen2008rotation} studied
rotation symmetry in S-boxes generated algebraically from finite-field
power maps. Related work has further examined rotation-symmetric
S-boxes and their affine-equivalence classes
\cite{kavut2012rotation,omland2022rotation}. Algebraic and alternative
representations of Rijndael have also been studied extensively
\cite{rosenthal2003polynomial,liu2005rijndael,barkan2002rijndael}.
These works show that finite-field structure and surrounding linear
transformations can produce useful symmetries and alternative
representations of S-boxes, but they address a different question from
the survival problem considered here.

More generally, linear, affine, EA, and CCZ equivalence provide
standard frameworks for comparing vectorial Boolean functions
\cite{carlet1998codes,biryukov2003toolbox,canteaut2019ccz,chen2021affine}.
Self-equivalences have also been studied in white-box cryptography
\cite{ranea2021self}. For inversion specifically,
K{\"o}lsch~\cite{kolsch2021ccz} obtained strong classifications of its
CCZ and EA equivalence classes. These notions allow broader input and output transformations than the
commuting linear self-symmetries considered here.

More directly related to our problem, Yuan et
al.~\cite{yuan2019inverse} classified the two-sided linear
equivalences
\[
J\circ T=U\circ J.
\]
Their classification identifies the possible transformations that can
occur in the stabilizer of \(A\circ J\). Ishizuka~\cite{ishizuka2026stabilizers}
recently proved generic stabilizer triviality in the full space of
vectorial functions. In contrast, we consider the structured family
\[
\{A\circ J:A\in GL_n(\mathbb F_2)\}.
\]
The above works do not determine how frequently a nontrivial linear
self-symmetry survives in this family, or how that frequency depends
on the arithmetic of \(n\).
Let
\[
\mathcal E_n
=
\{A\in GL_n(\mathbb F_2):
\operatorname{Stab}(A\circ J)\ne\{I\}\}
\]
and
\[
p_n=\frac{|\mathcal E_n|}{|GL_n(\mathbb F_2)|}.
\]
Thus \(p_n\) is the probability that a uniformly chosen outer linear
transformation leaves a nontrivial linear self-symmetry of the
transformed inversion map. Our main result determines the exact leading
quadratic exponent of \(p_n\). 
\begin{maintheorem}
\label{thm:intro-main}
Let \(q(n)\) be the smallest prime divisor of \(n\), and let
\[
\kappa=\prod_{j\ge1}(1-2^{-j}).
\]
For \(n\ge4\),
\[
\kappa^{q(n)}2^{-n^2(1-1/q(n))}
\le p_n\le
\frac n\kappa 2^{-n^2(1-1/q(n))+n}.
\]
Consequently,
\[
-\log_2 p_n
=
n^2\left(1-\frac1{q(n)}\right)+O(n).
\]
\end{maintheorem}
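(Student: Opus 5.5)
The approach is to translate membership in \(\mathcal E_n\) into a semilinear conjugacy condition using the classification of Yuan et al., and then count conjugators via centralizers.

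\textbf{Reduction to conjugacy.} Suppose \(T\in\operatorname{Stab}(A\circ J)\). From \(AJT=TAJ\) we get \(J\circ T=(A^{-1}TA)\circ J\), which is a two-sided linear equivalence of \(J\). The classification then forces \(T=M_a\sigma\), where \(M_a\) is multiplication by some \(a\in\mathbb F_{2^n}^*\) and \(\sigma=\mathrm{Frob}^r\). It also gives \(A^{-1}TA=M_{a^{-1}}\sigma\), because \(J(a\sigma(x))=a^{-1}\sigma(J(x))\). Conversely, any \(A\) conjugating \(M_{a^{-1}}\sigma\) to \(M_a\sigma\ne I\) lies in \(\mathcal E_n\). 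The set of such \(A\) is either empty or a coset of the centralizer \(C(M_a\sigma)\). Hence
\[
\mathcal E_n=\bigcup_{(a,r)\ne(1,0)}\{A:\ A M_{a^{-1}}\sigma A^{-1}=M_a\sigma\},
\qquad
|\mathcal E_n|\le \sum_{(a,r)\ne(1,0)}|C(M_a\sigma^{})|.
\]
There are at most \(n(2^n-1)\) index pairs. Since \(|GL_n(\mathbb F_2)|\ge\kappa 2^{n^2}\), the upper bound reduces to a uniform estimate \(|C(M_a\sigma)|\le 2^{n^2/q(n)}\) for every nonidentity \(M_a\sigma\).

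\textbf{Lower bound.} Take \(a=1\) and \(\sigma=\mathrm{Frob}^{n/q}\), which has order \(d=q=q(n)\). Then \(T=\sigma\) is conjugate to itself, so \(C(\sigma)\subseteq\mathcal E_n\). By the normal basis theorem applied to \(\mathbb F_{2^n}/K\), where \(K=\mathbb F_{2^{n/q}}\) is the fixed field, \(\mathbb F_{2^n}\) is a free module of rank \(m=n/q\) over \(R=\mathbb F_2[x]/(x^q-1)\), with \(x\) acting as \(\sigma\). Hence \(C(\sigma)=GL_m(R)\). Write \(R\) as a product of local rings with residue fields \(\mathbb F_{2^{e_i}}\); there are at most \(q\) of them. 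The units of \(M_m(R)\) then number at least \(\prod_i \prod_j(1-2^{-je_i})\,|R|^{m^2}\ge \kappa^q 2^{qm^2}=\kappa^q2^{n^2/q}\). Dividing by \(|GL_n|\le 2^{n^2}\) gives the stated lower bound.

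\textbf{Centralizer bound (main obstacle).} If \(\sigma=\mathrm{id}\) and \(a\ne1\), set \(e=[\mathbb F_2(a):\mathbb F_2]>1\). Then \(C(M_a)=GL_{n/e}(\mathbb F_{2^e})\), which has size at most \(2^{n^2/e}\le 2^{n^2/q}\) since \(e\mid n\) forces \(e\ge q\). The hard case is \(\sigma\) of order \(d>1\); here \(d\mid n\), so \(d\ge q\). I would first use Hilbert 90 to replace \(M_a\sigma\) by a conjugate \(M_\gamma\sigma\) with \(\gamma\in K\). Then \(T^d=M_c\) with \(c=N_{\mathbb F_{2^n}/K}(a)\in K\), and \(V=\mathbb F_{2^n}\) becomes a module over the cyclic (skew-polynomial) algebra \(\mathcal A=\mathbb F_{2^n}[y;\sigma]/(y^d-c)\) restricted to \(\mathbb F_2[T]\). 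I would then bound \(\dim_{\mathbb F_2}\operatorname{End}_{\mathbb F_2[T]}(V)\) through the structure of \(V\) as an \(\mathbb F_2[x]\)-module. The goal is to show that \(V\) is a direct sum of \(n/d\) copies of a cyclic module \(\mathbb F_2[x]/(g(x))\) with \(\deg g=d\), mirroring the \(a=1\) computation. The commutant then has dimension at most \((n/d)^2\cdot d=n^2/d\le n^2/q\). This decomposition, which requires showing that the elementary divisors of \(M_\gamma\sigma\) repeat uniformly, is the step I expect to be delicate. A fallback is the general inequality that the commutant dimension is at most \(\sum_i(2i-1)\deg f_i\) over the invariant factors \(f_i\), combined with the fact that the minimal polynomial of \(T\) has degree at least \(d\) (it divides \(x^d\)-type relations over \(K\)). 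Combining the three parts yields both inequalities, and taking logarithms gives \(-\log_2p_n=n^2(1-1/q(n))+O(n)\).
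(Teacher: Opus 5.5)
Your reduction, the count of index pairs, the case $r=0$, and the lower bound are correct and essentially the paper's argument. Treating $GL_m(R)$ through the local factors of $R$ is just a uniform version of the paper's split into $q(n)$ odd and $q(n)=2$. The gap is the step you flag yourself. For $T=M_a\sigma^r$ with $r\neq0$ (order $d=n/d_r$, fixed field $K=\mathbb F_{2^{d_r}}$) you never prove $\dim_{\mathbb F_2}Z(T)\le n^2/d$, and the intermediate claims you plan to use are false in general.

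Take $n=6$ and $r=2$, so $d=3$ and $K=\mathbb F_4$, and let $a$ generate $\mathbb F_{64}^\times$. Then $T^3=M_c$ with $c=a^{21}=\omega$, a primitive cube root of unity.
\begin{itemize}
\item For $\gamma\in K^\times$ one has $(M_\gamma\sigma^2)^3=M_{\gamma^3}=I$. So $T$ is not conjugate to any $M_\gamma\sigma^2$ with $\gamma\in K$; such a conjugation requires $c$ to be a $d$-th power in $K$, so your Hilbert 90 normalization fails.
\item $T^6+T^3+I=0$, and $x^6+x^3+1$ is irreducible over $\mathbb F_2$ (since $2$ has order $6$ modulo $9$). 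Hence $V$ is a \emph{cyclic} $\mathbb F_2[T]$-module with $6$-dimensional commutant. It is not a sum of $n/d=2$ copies of a degree-$3$ cyclic module, which would give $12$. In general $\dim Z(T)=nd_r/e_{a,r}$, so your decomposition holds only when $e_{a,r}=1$.
\end{itemize}

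Your fallback does not close the gap either. A minimal polynomial of degree $\ge d$ is compatible with invariant factors of degrees $d,1,\dots,1$. For these, $\sum_i(2i-1)\deg f_i=(n-d+1)^2+d-1$, which exceeds $n^2/d$ whenever $d<n$. What is actually needed is a bound $s\le n/d$ on the \emph{number} of invariant factors, because $\sum_{i,j}\deg\gcd(f_i,f_j)\le sn$. This bound is true: $T$ is $K$-linear, and Dedekind independence of $1,\sigma^r,\dots,\sigma^{r(d-1)}$ forces its minimal polynomial over $K$ to be $x^d-c$. So $V$ is cyclic over $K[T]$, hence generated by $d_r=n/d$ elements over $\mathbb F_2[T]$. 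None of this appears in your proposal.

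The paper avoids module structure entirely. Write $L(x)=\sum_i\ell_ix^{2^i}$. The condition $L\circ T_{a,r}=T_{a,r}\circ L$ then decouples into the $n$ independent equations $\ell_i^{2^r}=a^{2^i-1}\ell_i$. Each solution set is an $\mathbb F_{2^{d_r}}$-subspace of dimension at most one, because the ratio of two nonzero solutions is fixed by $\sigma^r$. This gives $\dim_{\mathbb F_2}Z(T_{a,r})\le nd_r\le n^2/q(n)$ at once; Hilbert 90 is needed only for the exact formula $nd_r/e_{a,r}$, not for the bound. This coefficientwise decoupling is the missing idea in your upper bound.
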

In particular, when \(n\) is even the leading quadratic term is
\(n^2/2\), while for prime \(n\) it is \(n^2\). Thus the rate at which
nontrivial self-symmetries disappear depends strongly on the arithmetic
of the extension degree. 

The theorem can be read in two ways. For designers who want symmetry, \(\mathcal E_n\) measures the set of
outer linear layers that preserve a nontrivial self-symmetry of
inversion. For designers who want to avoid symmetry, the theorem shows
that a generic outer layer removes all such symmetries. Common S-box
dimensions such as \(n=4\) and \(n=8\) lie in the least favorable case
\(q(n)=2\), which motivates direct checks for concrete constructions.

We develop several cryptographic applications in Section~5. We provide a simple rigidity criterion and apply it to the AES S-box, use ARIA to
illustrate the propagation of byte-wise self-symmetries through binary
diffusion, and consider Vision as a large-field example. We also briefly
discuss Camellia~\cite{rfc3713} and SM4~\cite{SM4ISO} as related inverse-based constructions.

The paper is organized as follows. Section~2 introduces the notation
and the basic symmetry structure of finite-field inversion. Section~3
develops the structural estimates for the exceptional set. Section~4
proves the main theorem. Section~5 develops cryptographic applications
to inverse-based S-boxes, including AES, ARIA, and Vision. Section~6
concludes the paper.

\section{Notation and Inversion Symmetries} 
In this section, we fix the notation and recall the basic structural facts about finite-field inversion and its linear symmetries that will be used throughout the paper. Let \(J\) denote the inverse function on \(\mathbb{F}_{2^n}\), extended by \(J(0)=0\). For \(A\in GL_n(\mathbb{F}_2)\), we write
\[
\Phi_A = A\circ J.
\]
This family is natural in the study of inversion-based S-boxes, where
an outer linear layer may preserve or destroy linear self-symmetries
of inversion.

\begin{definition}
For a map
\(\Phi:\mathbb F_{2^n}\to\mathbb F_{2^n}\), its linear stabilizer is
\[
\operatorname{Stab}(\Phi)
=
\{T\in GL_n(\mathbb F_2):T\circ\Phi=\Phi\circ T\}.
\]
For the family \(\Phi_A=A\circ J\), define the set
\[
\mathcal E_n
=
\left\{
A\in GL_n(\mathbb F_2):
\operatorname{Stab}(\Phi_A)\neq\{I\}
\right\},
\]
and let
\[
p_n
=
\frac{|\mathcal E_n|}{|GL_n(\mathbb F_2)|}.
\]
\end{definition}

A nonidentity element of \(\operatorname{Stab}(\Phi)\) represents a nontrivial
linear self-symmetry of \(\Phi\). Thus, triviality of the stabilizer means that
no nontrivial linear change of coordinates preserves the map. In particular,
\(p_n\) is the probability that a uniformly chosen outer linear transformation
\(A\) leaves a nontrivial linear self-symmetry in \(\Phi_A=A\circ J\). 

The central question is therefore which linear self-symmetries of the inverse
function can survive after composition with an outer linear transformation.
To address this question, we study the symmetries of \(J\),
which arise from the Frobenius action on \(\mathbb F_{2^n}\).

Let
$
\sigma(x)=x^2
$
denote the Frobenius automorphism of \(\mathbb F_{2^n}\).
Since inversion commutes with Frobenius,
$
J\circ\sigma=\sigma\circ J,
$
and hence
$
\langle \sigma\rangle\subseteq \operatorname{Stab}(J).$ Thus, the symmetry problem for \(\Phi_A\) begins with the interaction between the outer linear map \(A\) and the Frobenius symmetries inherent in \(J\). To make this interaction precise, we use the classification of two-sided linear equivalences of the inverse function due to Yuan et al~\cite{yuan2019inverse}. For \(a\in \mathbb F_{2^n}^{\times}\) and \(0\le r<n\), let
\[
M_a(x)=ax,
\qquad
T_{a,r}=M_a\circ \sigma^r,
\qquad
U_{a,r}=M_{a^{-1}}\circ \sigma^r.
\]
These are semilinear maps obtained by combining field multiplication with a Frobenius power. The following theorem gives the complete list of two-sided linear equivalences of \(J\) stated in the notation used here.

\begin{theorem}[Yuan et al.~\cite{yuan2019inverse}]
\label{thm:yuan}
Let \(n\ge 4\) and let
\(T,U\in GL_n(\mathbb F_2)\).
Then
\[
J\circ T=U\circ J
\]
if and only if there exist
\(a\in\mathbb F_{2^n}^{\times}\) and \(0\le r<n\) such that
\[
T=M_a\circ\sigma^r,
\qquad
U=M_{a^{-1}}\circ\sigma^r.
\]
\end{theorem}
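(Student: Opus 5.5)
The ``if'' direction is a direct check. Since \(J\) is multiplicative on \(\mathbb F_{2^n}^\times\) and commutes with \(\sigma\), we have \(J(a\,\sigma^r(x))=a^{-1}\sigma^r(x^{-1})=U_{a,r}(J(x))\) for \(x\neq0\), and both sides vanish at \(x=0\). All the work is in the ``only if'' direction. My plan there is to turn the condition into an additive map that respects inverses, and then use Hua's identity to force it to be a field automorphism.

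\emph{Step 1 (normalization).} Since \(J\) is an involution, \(J\circ T=U\circ J\) gives \(U=J\circ T\circ J\), and symmetrically \(T=J\circ U\circ J\). So for every \(z\) we have
\[
U(z^{-1})=T(z)^{-1},\qquad T(z^{-1})=U(z)^{-1}.
\]
Put \(c=T(1)\neq0\). Replacing \(T\) by \(M_{c^{-1}}\circ T\) and \(U\) by \(M_{c}\circ U\) keeps the relation, by the ``if'' computation with \(r=0\). After this we have \(T(1)=1\), and then \(U(1)=T(1)^{-1}=1\). It therefore suffices to show that a pair with \(T(1)=U(1)=1\) satisfies \(T=U=\sigma^r\). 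Undoing the normalization then gives \(T=M_c\circ\sigma^r\) and \(U=M_{c^{-1}}\circ\sigma^r\).

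\emph{Step 2 (Hua's identity).} For \(x,y\) with \(x,y\ne0\) and \(xy\ne1\), Hua's identity reads
\[
xyx=x-\bigl(x^{-1}+(y^{-1}-x)^{-1}\bigr)^{-1}.
\]
Apply \(T\) to it. Move each inversion across the maps, switching \(T\leftrightarrow U\) at each step via the relations in Step 1, and use additivity of \(T\) and \(U\). This gives
\[
T(xyx)=T(x)-\bigl(T(x)^{-1}+(U(y)^{-1}-T(x))^{-1}\bigr)^{-1}=T(x)\,U(y)\,T(x),
\]
where the last equality is Hua's identity applied to \(T(x),U(y)\). Taking \(x=1\) gives \(T(y)=U(y)\), so \(T=U\). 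Then \(T(x^2y)=T(x)^2T(y)\). Every element of \(\mathbb F_{2^n}\) is a square, so \(T\) is multiplicative. An additive, multiplicative bijection of \(\mathbb F_{2^n}\) is a field automorphism, hence equals \(\sigma^r\) for some \(0\le r<n\).

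\emph{Main obstacle: the degenerate cases.} Hua's identity needs every intermediate inverse to be a genuine inverse of a nonzero element, while \(J\) sends \(0\) to \(0\). So the argument above is valid only for \(x,y\) with \(x,y\ne0\), \(xy\ne1\), and the analogous nondegeneracy for the images, e.g. \(T(x)U(y)\ne1\). I would first establish \(T(xyx)=T(x)U(y)T(x)\) on this generic set. Then I would extend it to all pairs by additivity in \(y\): for fixed \(x\), the bad \(y\) form a set of at most a few points, and when \(2^n\) is large enough each bad \(y\) can be written as a sum of two good ones. The special case \(x=1\) needs a separate simple check. One route is the identity \(1/x+1/(x+1)=1/(x^2+x)\) for \(x\notin\{0,1\}\), which first yields \(T(x^2)=T(x)^2\), and then continuing the same way. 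This counting is where I expect the hypothesis \(n\ge4\) to enter: for very small fields the generic set is too sparse, and sporadic extra equivalences can occur. The bookkeeping of these exceptional points is the step I expect to require the most care.
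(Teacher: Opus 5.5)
Your argument is correct, and it takes a different route from the paper. The paper gives no proof of this theorem at all; it imports the classification from Yuan et al. You instead give a self-contained derivation via Hua's identity. Your key computation is sound. If every inverse is read as \(J\) and \(H(x,y)=x+J\bigl(J(x)+J(J(y)+x)\bigr)\), then \(T(H(x,y))=H(T(x),U(y))\) holds for \emph{all} \(x,y\), because it uses only additivity together with \(T\circ J=J\circ U\) and \(U\circ J=J\circ T\).

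The ``main obstacle'' you anticipate is not really there. Since \(T(x)^{-1}=U(x^{-1})\) and \(U\) is injective, \(T(x)U(y)=1\) holds exactly when \(xy=1\). So the nondegeneracy on the image side is automatic once \(x,y\neq0\) and \(xy\neq1\). In the degenerate cases \(x=0\), \(y=0\), or \(xy=1\), the target identity \(T(x^2y)=T(x)^2U(y)\) holds directly: either both sides vanish, or both equal \(T(x)\) because \(U(x^{-1})=T(x)^{-1}\). You therefore need no counting, no sum-of-two-good-points step, and no separate treatment of \(x=1\). Your additivity extension is still valid once \(2^n\ge7\).

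This also corrects your closing remark. The argument works for every \(n\), so the hypothesis \(n\ge4\) plays no role when \(T,U\) are linear, and no sporadic linear equivalences appear in small dimension. For example, when \(n=2\) we have \(J=\sigma\), and all six elements of \(GL_2(\mathbb F_2)\) already have the form \(M_a\circ\sigma^r\). The restriction matters only if affine maps are allowed: for \(n=2\) the linear map \(J\) satisfies \(J(x+c)=J(x)+J(c)\), giving equivalences with nonzero translations.

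In short, citing Yuan et al.\ keeps the paper short and ties it to their broader affine enumeration over \(\mathbb F_{p^n}\). Your route is elementary and self-contained, and it shows that the linear classification the paper actually uses holds without any restriction on \(n\).
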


The relevance of Theorem~\ref{thm:yuan} to our setting is immediate.
If \(T\in \operatorname{Stab}(\Phi_A)\), then
\[
T\circ A\circ J=A\circ J\circ T,
\]
or equivalently,
\[
J\circ T=(A^{-1}TA)\circ J.
\]
Thus Theorem~\ref{thm:yuan} shows that every linear self-symmetry of
\(\Phi_A\) must arise from one of the semilinear pairs
\[
T_{a,r}=M_a\circ\sigma^r,
\qquad
U_{a,r}=M_{a^{-1}}\circ\sigma^r,
\]
with
\[
A^{-1}T_{a,r}A=U_{a,r}.
\]
Consequently, the problem of determining whether \(\Phi_A\) has a
nontrivial linear stabilizer is reduced to a conjugacy problem between
these semilinear maps. To encode this condition, for
\(T,U\in GL_n(\mathbb F_2)\), let
\[
C(T,U)
=
\{A\in GL_n(\mathbb F_2):A^{-1}TA=U\}.
\]
Thus \(C(T,U)\) consists precisely of the outer linear transformations
that conjugate \(T\) to \(U\). The reduction above can now be stated
explicitly as follows.

\begin{proposition}
\label{prop:exceptional-union}
For \(n\ge 4\),
\[
\mathcal E_n
=
\bigcup_{(a,r)\ne(1,0)}
C(T_{a,r},U_{a,r}),
\]
where \(a\in\mathbb F_{2^n}^{\times}\) and \(0\le r<n\).
\end{proposition}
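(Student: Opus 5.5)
The plan is to prove the two inclusions separately, each using Theorem~\ref{thm:yuan} together with the identity \(T_{a,r}=I\iff(a,r)=(1,0)\).

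For \(\subseteq\), fix \(A\in\mathcal E_n\) and choose \(T\in\operatorname{Stab}(\Phi_A)\) with \(T\neq I\). From \(T\circ A\circ J=A\circ J\circ T\) we obtain \(J\circ T=(A^{-1}TA)\circ J\), as already computed before the statement. Set \(U=A^{-1}TA\in GL_n(\mathbb F_2)\). Theorem~\ref{thm:yuan}, which applies because \(n\ge4\), gives \(a\in\mathbb F_{2^n}^\times\) and \(0\le r<n\) with \(T=T_{a,r}\) and \(U=U_{a,r}\). Hence \(A^{-1}T_{a,r}A=U_{a,r}\), so \(A\in C(T_{a,r},U_{a,r})\). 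It remains to show \((a,r)\ne(1,0)\). Evaluating \(T_{a,r}\) at \(x=1\) gives \(T_{a,r}(1)=a\). So \(T_{a,r}=I\) forces \(a=1\), and then \(\sigma^r=I\) on \(\mathbb F_{2^n}\) forces \(r\equiv0\pmod n\), that is, \(r=0\). Since \(T\ne I\), we get \((a,r)\ne(1,0)\).

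For \(\supseteq\), let \(A\in C(T_{a,r},U_{a,r})\) with \((a,r)\neq(1,0)\), and put \(T=T_{a,r}\). This map is \(\mathbb F_2\)-linear and invertible, so \(T\in GL_n(\mathbb F_2)\).

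We first verify directly that \(J\circ T_{a,r}=U_{a,r}\circ J\). For \(x\ne0\),
\[
J(a x^{2^r})=a^{-1}(x^{-1})^{2^r}=U_{a,r}(J(x)),
\]
and both sides send \(0\) to \(0\). This is also the easy direction of Theorem~\ref{thm:yuan}.

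Combining this identity with \(A^{-1}TA=U_{a,r}\), we get \(J\circ T=A^{-1}TA\circ J\). Multiplying on the left by \(A\) gives \(A\circ J\circ T=T\circ A\circ J\), so \(T\in\operatorname{Stab}(\Phi_A)\). By the same evaluation argument as above, \(T\ne I\), so \(A\in\mathcal E_n\).

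No step here is a real obstacle, since the substance is carried by Theorem~\ref{thm:yuan}. The only point requiring care is the injectivity remark that the pair \((a,r)\) with \(0\le r<n\) determines \(T_{a,r}\), and that the identity arises only for \((1,0)\). This is what lets us translate "\(T\ne I\)" into the index condition \((a,r)\neq(1,0)\) in both directions.
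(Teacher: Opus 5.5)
Your proof is correct and is essentially the paper's argument. Both inclusions rest on Theorem~\ref{thm:yuan} and the rearrangement $J\circ T=(A^{-1}TA)\circ J$. You differ only in checking $J\circ T_{a,r}=U_{a,r}\circ J$ by direct computation rather than citing the theorem, and in justifying $T_{a,r}\neq I$ for $(a,r)\neq(1,0)$ via $T_{a,r}(1)=a$ and the order of $\sigma$, which the paper leaves implicit.

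One small point of bookkeeping: the first inclusion only needs the trivial fact $T_{1,0}=I$. Your evaluation argument proves the converse, which is what the second inclusion actually uses.
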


\begin{proof}
Suppose first that \(A\in\mathcal E_n\). Then
\(\operatorname{Stab}(\Phi_A)\) contains some nonidentity
\(T\in GL_n(\mathbb F_2)\). Since
\[
T\circ A\circ J=A\circ J\circ T,
\]
we have
\[
J\circ T=(A^{-1}TA)\circ J.
\]
By Theorem~\ref{thm:yuan}, there exist
\(a\in\mathbb F_{2^n}^{\times}\) and \(0\le r<n\) such that
\[
T=T_{a,r},
\qquad
A^{-1}T_{a,r}A=U_{a,r}.
\]
Because \(T\ne I\), we have \((a,r)\ne(1,0)\). Hence
\[
A\in C(T_{a,r},U_{a,r}).
\]

Conversely, suppose that
\[
A\in C(T_{a,r},U_{a,r})
\]
for some \((a,r)\ne(1,0)\). Then
\[
A^{-1}T_{a,r}A=U_{a,r},
\]
and, by Theorem~\ref{thm:yuan},
\[
J\circ T_{a,r}=U_{a,r}\circ J.
\]
Combining these identities gives
\[
T_{a,r}\circ A\circ J
=
A\circ J\circ T_{a,r}.
\]
Thus
\[
T_{a,r}\in \operatorname{Stab}(\Phi_A).
\]
Since \((a,r)\ne(1,0)\), we have \(T_{a,r}\ne I\), so
\(A\in\mathcal E_n\).
\end{proof}

Hence, estimating \(p_n\) reduces to controlling the sizes of these
conjugator sets. The structural results needed for this counting
problem are developed in the next section.

\section{Structural Lemmas for the Exceptional Set} 
Proposition~\ref{prop:exceptional-union} expresses the exceptional set as a union of semilinear conjugator sets. We now develop the structural lemmas needed to control the size of these sets and to identify the largest contributions to \(\mathcal E_n\).

For \(T\in GL_n(\mathbb F_2)\), let
\[
C_{GL_n(\mathbb F_2)}(T)
=
\{A\in GL_n(\mathbb F_2):AT=TA\}
\]
denote the group centralizer of \(T\). We also write
\[
Z(T)
=
\{L\in \operatorname{End}_{\mathbb F_2}(\mathbb F_{2^n}):LT=TL\}
\]
for its centralizer algebra. These two objects are related by
\[
C_{GL_n(\mathbb F_2)}(T)=Z(T)^\times.
\]

The connection with the conjugator sets from the previous section is
straightforward but important. If \(C(T,U)\neq\varnothing\), then
\(C(T,U)\) is a coset of \(C_{GL_n(\mathbb F_2)}(T)\). In particular,
\[
|C(T,U)|=|C_{GL_n(\mathbb F_2)}(T)|.
\]
Thus, controlling the sizes of the conjugator sets reduces to
understanding the centralizers of the semilinear maps \(T_{a,r}\).

The upper bound for \(p_n\) will follow by summing the sizes of the
conjugator sets appearing in Proposition~\ref{prop:exceptional-union}.
Since a nonempty conjugator set has the same cardinality as the
corresponding group centralizer, it remains to control
\(C_{GL_n(\mathbb F_2)}(T_{a,r})\).

We do this through the centralizer algebra \(Z(T_{a,r})\). For the upper bound on \(p_n\), it is enough to obtain a uniform bound
on the dimension of \(Z(T_{a,r})\). The exact dimension formula is more
refined, we state it below and provide its proof in Appendix~A. 

For \(0\le r<n\), set
\[
d_r=\gcd(n,r),
\]
and let
\[
z_{a,r}
=
N_{\mathbb F_{2^n}/\mathbb F_{2^{d_r}}}(a)
\in \mathbb F_{2^{d_r}}^\times
\]
be the relative norm of \(a\) from \(\mathbb F_{2^n}\) to
\(\mathbb F_{2^{d_r}}\). Finally, define
\[
e_{a,r}
=
[\mathbb F_2(z_{a,r}):\mathbb F_2].
\]

\begin{proposition}
\label{prop:centralizer-dimension}
For \(a\in\mathbb F_{2^n}^{\times}\) and \(0\le r<n\),
\[
\dim_{\mathbb F_2} Z(T_{a,r})
=
\frac{n\,d_r}{e_{a,r}}.
\]
\end{proposition}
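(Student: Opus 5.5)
The plan is to compute $Z(T_{a,r})$ by viewing $V=\mathbb F_{2^n}$ as a module over a polynomial ring on which $T=T_{a,r}$ acts, to show that this module is free over a suitable commutative ring, and then to read off the centralizer as a full matrix algebra over that ring. Throughout, write $d=d_r$, $m=n/d$, $z=z_{a,r}$ and $e=e_{a,r}$.

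\textbf{Step 1: $T^m=M_z$.} Iterating $T$ gives
\[
T^k(x)=\Big(\prod_{i<k}a^{2^{ri}}\Big)\,x^{2^{rk}} .
\]
The automorphism $\sigma^r$ generates $\mathrm{Gal}(\mathbb F_{2^n}/\mathbb F_{2^d})$, which is cyclic of order $m$. Hence the product $\prod_{i<m}a^{2^{ri}}$ runs over all Galois conjugates of $a$ over $\mathbb F_{2^d}$. It follows that
\[
T^m=M_z,\qquad z=N_{\mathbb F_{2^n}/\mathbb F_{2^d}}(a)\in\mathbb F_{2^d}^\times .
\]

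\textbf{Step 2: $T$ is cyclic over $\mathbb F_{2^d}$ with minimal polynomial $t^m-z$.} Since $\sigma^r$ fixes $\mathbb F_{2^d}$, the map $T$ is $\mathbb F_{2^d}$-linear. Suppose some polynomial $\sum_{i<m}c_it^i$ with $c_i\in\mathbb F_{2^d}$ annihilates $T$. Then
\[
\sum_{i<m}c_i\alpha_i\,\sigma^{ri}=0
\]
as maps on $\mathbb F_{2^n}^\times$, where $\alpha_i\ne0$ are the products from Step 1. The maps $\sigma^{ri}$, $i<m$, are distinct automorphisms, so Dedekind's independence of characters forces every $c_i=0$. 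Therefore the minimal polynomial of $T$ over $\mathbb F_{2^d}$ has degree at least $m=\dim_{\mathbb F_{2^d}}V$. Combined with Step 1, it equals $g(t)=t^m-z$, and $V\cong \mathbb F_{2^d}[t]/(g)$ is cyclic.

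\textbf{Step 3: reduction to $K=\mathbb F_2(z)=\mathbb F_{2^e}$.} Any $F_2$-linear $L$ commuting with $T$ also commutes with $T^m=M_z$. Since $z$ generates $K$ as an $\mathbb F_2$-algebra, $L$ commutes with $M_c$ for every $c\in K$, so $L$ is $K$-linear. Hence $Z(T)$ is the centralizer of $T$ inside $\mathrm{End}_K(V)$. Now let $R=K[t]/(g)$, a commutative $K$-algebra of dimension $m$. Because $K\subseteq\mathbb F_{2^d}$ and $g\in K[t]$, Step 2 gives
\[
V\cong \mathbb F_{2^d}\otimes_K R
\]
as $R$-modules. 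Choosing a $K$-basis of $\mathbb F_{2^d}$ shows $V\cong R^{k}$ is a free $R$-module with $k=d/e$. This argument does not require $g$ to be separable; that matters because $t^m-z$ is inseparable when $m$ is even.

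\textbf{Step 4: counting dimensions.} The centralizer of $T$ in $\mathrm{End}_K(V)$ is $\mathrm{End}_R(R^k)=M_k(R)$. Its $K$-dimension is $k^2m$, so its $\mathbb F_2$-dimension is
\[
e\cdot k^2 m \;=\; e\cdot\frac{d^2}{e^2}\cdot\frac nd \;=\; \frac{nd}{e},
\]
which is the formula in Proposition~\ref{prop:centralizer-dimension}.

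\textbf{Main obstacle.} I expect the hardest part to be Steps 2 and 3. One needs the independence-of-characters argument to give cyclicity, and one must then pass correctly from $\mathbb F_{2^d}$ down to the smaller field $K$ without assuming semisimplicity. Freeness over $R$ is what avoids any elementary-divisor computation in that descent.
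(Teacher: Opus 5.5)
Your proof is correct, and it takes a genuinely different route from the paper.

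The paper argues coefficientwise. Writing $L(x)=\sum_i \ell_i x^{2^i}$, the condition $L T_{a,r}=T_{a,r}L$ decouples into the $n$ equations $\ell_i^{2^r}=a^{2^i-1}\ell_i$. Hilbert's Theorem~90 (Lemma~\ref{lem:hilbert90-space}) then shows that the $i$-th equation has a $d_r$-dimensional solution space exactly when $z_{a,r}^{2^i-1}=1$, i.e.\ when $e_{a,r}\mid i$, and only $\ell_i=0$ otherwise. Counting the $n/e_{a,r}$ admissible indices gives $n d_r/e_{a,r}$.

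You never look at individual coefficients. Instead you use $T^{n/d_r}=M_{z_{a,r}}$, Dedekind independence to get a cyclic vector over $\mathbb F_{2^{d_r}}$, descent to $K=\mathbb F_2(z_{a,r})$, and freeness over $K[t]/(t^{n/d_r}-z_{a,r})$. Hilbert~90 is never needed.

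The paper's computation is shorter and more explicit. It identifies exactly which Frobenius exponents can appear in a commuting map, in the same coefficient-comparison style later used for Proposition~\ref{prop:rigidity-criterion}.

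Your argument proves more. It gives an algebra isomorphism $Z(T_{a,r})\cong M_{d_r/e_{a,r}}\bigl(\mathbb F_{2^{e_{a,r}}}[t]/(t^{n/d_r}-z_{a,r})\bigr)$. So $C_{GL_n(\mathbb F_2)}(T_{a,r})$ is the corresponding matrix unit group over that ring, and its exact order can be computed rather than merely bounded by $|Z(T_{a,r})|$. In the special case $a=1$, $r=n/q(n)$ (so $z_{a,r}=1$ and $K=\mathbb F_2$), it recovers Proposition~\ref{prop:tau-centralizer}, with your Dedekind step playing the role of the Normal Basis Theorem. Your approach would therefore let Appendices~A and~B be handled by a single argument.
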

\begin{proof}
See Appendix~A. 
\end{proof}

The exact formula immediately yields the uniform estimate needed for
the upper bound on \(p_n\).

\begin{lemma}
\label{lem:uniform-centralizer}
For every \((a,r)\neq(1,0)\),
\[
\dim_{\mathbb F_2} Z(T_{a,r})
\le
\frac{n^2}{q(n)}.
\]
\end{lemma}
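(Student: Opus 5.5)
We derive the bound directly from Proposition~\ref{prop:centralizer-dimension}, which gives
\[
\dim_{\mathbb F_2} Z(T_{a,r})=\frac{n\,d_r}{e_{a,r}}.
\]
It therefore suffices to show that \(d_r/e_{a,r}\le n/q(n)\) whenever \((a,r)\ne(1,0)\). We split according to whether \(r=0\) or \(r\ne0\).

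\emph{Case \(r\neq 0\).} Here \(1\le r<n\), so \(d_r=\gcd(n,r)\) is a proper divisor of \(n\). Write \(d_r=n/m\) with \(m>1\). Then \(m\) is a divisor of \(n\) exceeding \(1\), so \(m\) has a prime factor dividing \(n\), and hence \(m\ge q(n)\). Therefore \(d_r\le n/q(n)\). Since \(e_{a,r}\ge1\), we get
\[
\frac{n d_r}{e_{a,r}}\le n d_r\le \frac{n^2}{q(n)}.
\]

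\emph{Case \(r=0\).} Then \(a\ne1\), and \(d_0=\gcd(n,0)=n\). The relative norm to \(\mathbb F_{2^n}\) is the identity, so \(z_{a,0}=a\) and \(e_{a,0}=[\mathbb F_2(a):\mathbb F_2]\). Because \(a\neq 1\) and \(a\ne0\), the element \(a\) does not lie in \(\mathbb F_2\), so \(e_{a,0}>1\). Moreover \(e_{a,0}\) divides \(n\), since \(\mathbb F_2(a)\subseteq\mathbb F_{2^n}\). Hence \(e_{a,0}\ge q(n)\), and
\[
\frac{n\cdot n}{e_{a,0}}\le \frac{n^2}{q(n)}.
\]

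The only step needing care is the case \(r=0\): one must check that the convention for \(d_r\) at \(r=0\) and the identification of the norm with the identity map are consistent with Proposition~\ref{prop:centralizer-dimension}. As a cross-check, \(T_{a,0}=M_a\) is multiplication by \(a\). This map is semisimple with minimal polynomial the minimal polynomial of \(a\), of degree \(e\), so \(\mathbb F_{2^n}\) is a free \(\mathbb F_2(a)\)-module of rank \(n/e\). Its centralizer is then \(M_{n/e}(\mathbb F_{2^e})\), of \(\mathbb F_2\)-dimension \(n^2/e\), which agrees with the formula. Both cases give the bound, which proves the lemma.
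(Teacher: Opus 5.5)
Your proof is correct and follows essentially the same argument as the paper: both cases come from the exact formula in Proposition~\ref{prop:centralizer-dimension}, using \(e_{a,0}\ge q(n)\) when \(r=0\) and \(d_r\le n/q(n)\) with \(e_{a,r}\ge 1\) when \(r\neq 0\). Your semisimplicity cross-check for \(T_{a,0}=M_a\) goes beyond the paper's proof; it is a valid independent confirmation of the \(r=0\) case, but the lemma does not need it.
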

\begin{proof}
If \(r=0\), then \(a\neq1\). Hence
\(e_{a,0}=[\mathbb F_2(a):\mathbb F_2]>1\).
Since \(e_{a,0}\mid n\), we have
\(e_{a,0}\ge q(n)\), and therefore
\[
\dim_{\mathbb F_2} Z(T_{a,0})
=
\frac{n^2}{e_{a,0}}
\le
\frac{n^2}{q(n)}.
\]

If \(1\le r<n\), then \(d_r=\gcd(n,r)\) is a proper divisor of \(n\),
so
\[
d_r\le \frac{n}{q(n)}.
\]
Since \(e_{a,r}\ge1\),
\[
\dim_{\mathbb F_2} Z(T_{a,r})
=
\frac{n\,d_r}{e_{a,r}}
\le
n d_r
\le
\frac{n^2}{q(n)}.
\]  
\end{proof}

\begin{remark}
Since
\[
C_{GL_n(\mathbb F_2)}(T_{a,r})=Z(T_{a,r})^\times,
\]
we obtain
\[
|C_{GL_n(\mathbb F_2)}(T_{a,r})|
\le
|Z(T_{a,r})|
=
2^{\dim_{\mathbb F_2} Z(T_{a,r})}
\le
2^{n^2/q(n)}.
\]
\end{remark}

The preceding lemma shows that every nontrivial semilinear candidate has
centralizer dimension at most \(n^2/q(n)\). To obtain a matching lower
bound for \(p_n\), we now show that this scale is realized by a natural
family arising from the Frobenius symmetry of inversion.

Let
\[
m=\frac{n}{q(n)},
\qquad
\tau=\sigma^m.
\]
Since \(\tau=\sigma^m\) is an \(\mathbb F_2\)-linear automorphism, we
may view \(\tau\) as an element of \(GL_n(\mathbb F_2)\). Moreover, \(\tau\) has order \(q(n)\), and its fixed field
\(\mathbb F_{2^m}\) has maximal size among the proper subfields of
\(\mathbb F_{2^n}\).
\begin{lemma}
\label{lem:frobenius-exceptional}
If
\[
A\in C_{GL_n(\mathbb F_2)}(\tau),
\]
then \(A\in\mathcal E_n\). Hence
\[
C_{GL_n(\mathbb F_2)}(\tau)\subseteq \mathcal E_n.
\]
\end{lemma}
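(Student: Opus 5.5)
The plan is to exhibit an explicit nonidentity element of \(\operatorname{Stab}(\Phi_A)\), namely \(\tau\) itself, and then conclude by the definition of \(\mathcal E_n\). No counting and no appeal to Theorem~\ref{thm:yuan} in its full strength is needed here. The only input is that \(\tau=\sigma^m\) commutes with \(J\), together with the hypothesis that \(A\) commutes with \(\tau\).

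First, I will note that \(J\circ\tau=\tau\circ J\). This holds because \((x^{2^m})^{-1}=(x^{-1})^{2^m}\) for \(x\ne0\), and both sides vanish at \(x=0\). Equivalently, it is the special case \(a=1\), \(r=m\) of Theorem~\ref{thm:yuan}, where \(T_{1,m}=U_{1,m}=\tau\); it also follows from \(\langle\sigma\rangle\subseteq\operatorname{Stab}(J)\) as recalled in Section~2. Next, I will take \(A\in C_{GL_n(\mathbb F_2)}(\tau)\), so that \(A\tau=\tau A\), and compute
\[
\tau\circ\Phi_A=\tau\circ A\circ J=A\circ\tau\circ J=A\circ J\circ\tau=\Phi_A\circ\tau .
\]
This shows \(\tau\in\operatorname{Stab}(\Phi_A)\). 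In the language of Proposition~\ref{prop:exceptional-union}, it says \(A\in C(T_{1,m},U_{1,m})\), since \(A^{-1}\tau A=\tau\).

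The one point requiring care is that \(\tau\neq I\), i.e.\ that \((1,m)\ne(1,0)\). Since \(q(n)\ge2\) is a prime divisor of \(n\), we have \(1\le m=n/q(n)<n\). So \(\sigma^m\) is a nontrivial power of the Frobenius, whose order is exactly \(n\), and hence \(\tau\) is not the identity on \(\mathbb F_{2^n}\). Concretely, \(\tau\) fixes only the proper subfield \(\mathbb F_{2^m}\). Therefore \(\operatorname{Stab}(\Phi_A)\ne\{I\}\), so \(A\in\mathcal E_n\), and the inclusion \(C_{GL_n(\mathbb F_2)}(\tau)\subseteq\mathcal E_n\) follows. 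None of these steps is a real obstacle; the only subtlety is checking the nontriviality of \(\tau\).
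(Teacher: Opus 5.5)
Your proposal is correct and follows essentially the same argument as the paper: \(\tau\) commutes with both \(J\) and \(A\), so \(\tau\circ A\circ J=A\circ J\circ\tau\), hence \(\tau\in\operatorname{Stab}(\Phi_A)\), and \(\tau\neq I\) gives \(A\in\mathcal E_n\). The paper simply asserts \(\tau\neq I\); your check that \(1\le m<n\), so that \(\tau\) is a nontrivial Frobenius power, supplies that step explicitly.
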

\begin{proof}
Since \(\tau\) commutes with both \(A\) and \(J\),
\[
\tau\circ A\circ J
=
A\circ \tau\circ J
=
A\circ J\circ \tau.
\]
Thus
\[
\tau\in\operatorname{Stab}(\Phi_A).
\]
Because \(\tau\neq I\), we have \(A\in\mathcal E_n\).
\end{proof}

It remains to estimate the size of
\(C_{GL_n(\mathbb F_2)}(\tau)\).
The key point is that the action of \(\tau\) admits a natural module
description over the group algebra of the cyclic group generated by
\(\tau\). The following structural description provides a convenient
way to estimate the lower bound. Since its proof is relatively long and
technical, and is not central to the main argument, we provide it in
Appendix~B.

Let \(\langle\tau\rangle\) denote the cyclic group generated by
\(\tau\). Since \(\tau\) has order \(q(n)\), this group has order
\(q(n)\). Let
\[
R=\mathbb F_2[\langle\tau\rangle]
\]
be the group algebra of \(\langle\tau\rangle\) over \(\mathbb F_2\).

We use the \(R\)-module structure induced by the action of
\(\langle\tau\rangle\) to estimate
\(C_{GL_n(\mathbb F_2)}(\tau)\).

\begin{proposition}
\label{prop:tau-centralizer}
Under the natural \(R\)-module structure on \(\mathbb F_{2^n}\)
induced by the action of \(\langle\tau\rangle\),
\[
\mathbb F_{2^n}\cong R^m,
\qquad
m=\frac{n}{q(n)}.
\]
Consequently,
\[
Z(\tau)\cong M_m(R),
\]
where \(M_m(R)\) denotes the ring of \(m\times m\) matrices with
entries in \(R\). Moreover,
\[
C_{GL_n(\mathbb F_2)}(\tau)\cong GL_m(R).
\]
\end{proposition}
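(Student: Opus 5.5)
The plan is to use the normal basis theorem. The extension \(\mathbb F_{2^n}/\mathbb F_{2^m}\) with \(m=n/q(n)\) is Galois with cyclic group \(\langle\tau\rangle\) of order \(q=q(n)\). By the normal basis theorem there is \(\beta\in\mathbb F_{2^n}\) such that \(\beta,\tau\beta,\dots,\tau^{q-1}\beta\) is an \(\mathbb F_{2^m}\)-basis of \(\mathbb F_{2^n}\). In other words, \(\mathbb F_{2^n}\cong \mathbb F_{2^m}[\langle\tau\rangle]\) as modules over \(\mathbb F_{2^m}[\langle\tau\rangle]\), and this module is free of rank one. Here \(\tau\) acts \(\mathbb F_{2^m}\)-linearly because it fixes \(\mathbb F_{2^m}\).

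Next we restrict scalars. As an \(R\)-module we have
\[
\mathbb F_{2^m}[\langle\tau\rangle]=\mathbb F_{2^m}\otimes_{\mathbb F_2}R,
\]
where \(\tau\) acts only on the second factor. Choosing an \(\mathbb F_2\)-basis \(\omega_1,\dots,\omega_m\) of \(\mathbb F_{2^m}\) gives the decomposition
\[
\mathbb F_{2^n}=\bigoplus_{i=1}^m\bigoplus_{j=0}^{q-1}\mathbb F_2\,\omega_i\tau^j(\beta)=\bigoplus_{i=1}^m R\cdot(\omega_i\beta).
\]
To check that each summand \(R\cdot(\omega_i\beta)\) is free of rank one, note that \(\tau^j(\omega_i\beta)=\omega_i\tau^j\beta\). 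The \(q\) elements \(\omega_i\tau^j\beta\) for \(j=0,\dots,q-1\) are therefore \(\mathbb F_2\)-independent, by \(\mathbb F_{2^m}\)-independence of the normal basis together with the independence of the \(\omega_i\) over \(\mathbb F_2\). Counting dimensions, \(mq=n\), so we obtain \(\mathbb F_{2^n}\cong R^m\) as \(R\)-modules.

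The remaining claims are then formal. An \(\mathbb F_2\)-linear map \(L\) commutes with \(\tau\) exactly when it is an \(R\)-module endomorphism, since \(R\) is generated over \(\mathbb F_2\) by \(\tau\). Hence \(Z(\tau)=\operatorname{End}_R(\mathbb F_{2^n})\cong\operatorname{End}_R(R^m)\). Because \(R\) is commutative, \(\operatorname{End}_R(R^m)\cong M_m(R)\), with \(R\) acting on the free module by right multiplication (which is the same as left here). Taking unit groups and using \(C_{GL_n(\mathbb F_2)}(\tau)=Z(\tau)^\times\), as noted before Lemma~\ref{lem:uniform-centralizer}, gives \(C_{GL_n(\mathbb F_2)}(\tau)\cong M_m(R)^\times=GL_m(R)\).

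The only genuinely nontrivial input is the normal basis theorem for the extension \(\mathbb F_{2^n}/\mathbb F_{2^m}\), which I will cite rather than prove. The other point needing care is bookkeeping: confirming that restriction of scalars from \(\mathbb F_{2^m}[\langle\tau\rangle]\) to \(R\) yields a free module. This works because \(\mathbb F_{2^m}\) is free over \(\mathbb F_2\) and \(\tau\) is \(\mathbb F_{2^m}\)-linear. Note that \(R\) need not be semisimple. For example, when \(q=2\), \(R=\mathbb F_2[x]/(x+1)^2\). This causes no difficulty, since the argument never uses semisimplicity, only freeness.
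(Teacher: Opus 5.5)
Your proposal is correct and follows essentially the same route as the paper: the normal basis theorem for $\mathbb F_{2^n}/\mathbb F_{2^m}$, then restriction of scalars to $R$ to obtain $R^m$, then the identification $Z(\tau)=\operatorname{End}_R(\mathbb F_{2^n})\cong M_m(R)$, and finally passing to unit groups. Your explicit $\mathbb F_2$-basis $\omega_i\tau^j\beta$ simply spells out the isomorphism $\mathbb F_{2^m}\otimes_{\mathbb F_2}R\cong R^m$, which the paper states without detail.
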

\begin{proof}
See Appendix~B.
\end{proof}

The preceding proposition reduces the lower-bound problem to comparing
the number of invertible matrices over \(R\) with the total number of
matrices over \(R\).

Let
\[
\kappa=\prod_{j=1}^{\infty}(1-2^{-j}).
\]
The following lemma gives the required lower bound on the proportion of
invertible matrices in \(M_m(R)\).

\begin{lemma}
\label{lem:invertible-proportion}
With \(R=\mathbb F_2[\langle\tau\rangle]\),
\[
|C_{GL_n(\mathbb F_2)}(\tau)|
=
|GL_m(R)|
\ge
\kappa^{q(n)}|M_m(R)|.
\]
\end{lemma}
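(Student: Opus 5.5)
We take the structure of \(R\) as the starting point. Write \(q=q(n)\). Since \(\tau\) has order \(q\), we have \(R\cong\mathbb F_2[x]/(x^q-1)\). Because \(q\) is prime, two cases arise.

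If \(q\) is odd, then \(x^q-1\) is separable over \(\mathbb F_2\) and factors as \((x-1)\Phi_q(x)\). The cyclotomic factor \(\Phi_q\) splits into irreducibles of equal degree \(f=\operatorname{ord}_q(2)\). By the Chinese remainder theorem,
\[
R\cong \mathbb F_2\times \mathbb F_{2^{f}}^{\,s},\qquad 1+sf=q.
\]

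If \(q=2\), then \(R\cong\mathbb F_2[x]/(x-1)^2\). This is a local ring with maximal ideal \(\mathfrak m=(x-1)\) and residue field \(\mathbb F_2\).

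In both cases, write \(R\cong\prod_i R_i\) with each \(R_i\) local, having residue field \(\mathbb F_{2^{f_i}}\) and length \(\ell_i\), so that \(\sum_i f_i\ell_i=q\). Then
\[
M_m(R)\cong\prod_i M_m(R_i),\qquad GL_m(R)\cong\prod_i GL_m(R_i),
\]
and the proportion of invertible matrices factors over the local components.

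For a single local factor, a matrix over \(R_i\) is invertible if and only if its reduction modulo the maximal ideal \(\mathfrak m_i\) is invertible over the residue field \(k_i=\mathbb F_{2^{f_i}}\). Every fiber of the reduction map \(M_m(R_i)\to M_m(k_i)\) has the same size. Hence
\[
\frac{|GL_m(R_i)|}{|M_m(R_i)|}=\frac{|GL_m(k_i)|}{|M_m(k_i)|}=\prod_{j=1}^m(1-Q_i^{-j}),\qquad Q_i=2^{f_i}.
\]
Since \(Q_i\ge2\), each factor satisfies \(1-Q_i^{-j}\ge 1-2^{-j}\). Therefore the proportion for each local factor is at least \(\prod_{j\ge1}(1-2^{-j})=\kappa\).

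Multiplying over the local factors gives a proportion of at least \(\kappa^{t}\), where \(t\) is the number of local factors of \(R\). We have \(t=1\) when \(q=2\) and \(t=1+s\le q\) when \(q\) is odd. In all cases \(t\le q\), and since \(\kappa<1\), this yields \(\kappa^t\ge\kappa^{q}\). Combining this with the identification \(C_{GL_n(\mathbb F_2)}(\tau)\cong GL_m(R)\) from Proposition~\ref{prop:tau-centralizer} gives
\[
|C_{GL_n(\mathbb F_2)}(\tau)|=|GL_m(R)|\ge\kappa^{q(n)}|M_m(R)|.
\]

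The step requiring the most care is the local reduction: that invertibility over \(R_i\) is detected modulo \(\mathfrak m_i\), and that the reduction fibers are equicardinal. For the first, use the determinant: a matrix is invertible if and only if its determinant is a unit, and an element of a local ring is a unit if and only if it is nonzero in the residue field. For the second, note that reduction is a surjective additive group homomorphism, so its fibers are cosets of \(M_m(\mathfrak m_i)\).
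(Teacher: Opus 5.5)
Your proposal is correct and follows essentially the same route as the paper: you identify $R\cong\mathbb F_2[x]/(x^{q(n)}-1)$, split into the semisimple case ($q(n)$ odd) and the local case ($q(n)=2$), detect invertibility on residue fields via the determinant, and bound each of the at most $q(n)$ factors by $\kappa$. Your unified treatment through local factors, and your explicit argument that the fibers of the reduction map have equal size, make precise a step the paper uses implicitly in the case $q(n)=2$.
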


\begin{proof}
Since
\[
R=\mathbb F_2[\langle\tau\rangle]
\cong
\mathbb F_2[x]/(x^{q(n)}-1),
\]
we distinguish the cases \(q(n)\) odd and \(q(n)=2\).

Suppose first that \(q(n)\) is odd. Then \(x^{q(n)}-1\) is square-free
over \(\mathbb F_2\). Hence
\[
R\cong \prod_{i=1}^s \mathbb F_{2^{f_i}}
\]
for some positive integers \(f_1,\ldots,f_s\), with \(s\le q(n)\).
It follows that
\[
M_m(R)\cong
\prod_{i=1}^s M_m(\mathbb F_{2^{f_i}})
\]
and
\[
GL_m(R)\cong
\prod_{i=1}^s GL_m(\mathbb F_{2^{f_i}}).
\]
Therefore
\[
\frac{|GL_m(R)|}{|M_m(R)|}
=
\prod_{i=1}^s
\prod_{j=1}^m
\left(1-2^{-f_i j}\right).
\]
Since \(f_i\ge1\),
\[
\prod_{j=1}^m
\left(1-2^{-f_i j}\right)
\ge
\prod_{j=1}^{\infty}
\left(1-2^{-j}\right)
=
\kappa,
\]
and hence
\[
\frac{|GL_m(R)|}{|M_m(R)|}
\ge
\kappa^s
\ge
\kappa^{q(n)}.
\]

Now suppose that \(q(n)=2\). Then
\[
R\cong
\mathbb F_2[x]/((x+1)^2).
\]
Let \(\mathfrak m\) be the ideal generated by the image of \(x+1\).
Then \(R/\mathfrak m\cong\mathbb F_2\). Since \(R\) is a local ring,
an element of \(R\) is a unit if and only if its image in
\(R/\mathfrak m\) is nonzero. Therefore, for \(A\in M_m(R)\),
\[
A\in GL_m(R)
\iff
\det(A)\in R^\times
\iff
\det(\overline A)\neq0,
\]
where \(\overline A\) denotes the image of \(A\) in
\(M_m(R/\mathfrak m)\). Hence \(A\) is invertible over \(R\) if and
only if \(\overline A\) is invertible over \(\mathbb F_2\). Consequently,
\[
\frac{|GL_m(R)|}{|M_m(R)|}
=
\frac{|GL_m(\mathbb F_2)|}{|M_m(\mathbb F_2)|}
=
\prod_{j=1}^m(1-2^{-j})
\ge
\kappa
\ge
\kappa^{q(n)}.
\]
Thus
\[
|GL_m(R)|
\ge
\kappa^{q(n)}|M_m(R)|.
\]
Since
\[
|C_{GL_n(\mathbb F_2)}(\tau)|=|GL_m(R)|,
\]
the result follows.
\end{proof}

\begin{remark}
Since \(R=\mathbb F_2[\langle\tau\rangle]\) has
\[
|R|=2^{q(n)},
\]
we have
\[
|M_m(R)|
=
|R|^{m^2}
=
2^{q(n)m^2}
=
2^{n^2/q(n)}.
\]
Therefore,
\[
|C_{GL_n(\mathbb F_2)}(\tau)|
=
|GL_m(R)|
\ge
\kappa^{q(n)}2^{n^2/q(n)}.
\]
\end{remark}
We have now obtained the two structural estimates needed for the main
probabilistic result. On the upper-bound side, every nontrivial
semilinear candidate satisfies
\[
\dim_{\mathbb F_2} Z(T_{a,r})
\le
\frac{n^2}{q(n)},
\]
and therefore
\[
|C_{GL_n(\mathbb F_2)}(T_{a,r})|
\le
2^{n^2/q(n)}.
\]

On the lower-bound side, the Frobenius automorphism
\[
\tau=\sigma^{n/q(n)}
\]
produces a natural family of exceptional outer linear transformations.
Since
\[
|M_m(R)|
=
2^{n^2/q(n)},
\]
we have
\[
|C_{GL_n(\mathbb F_2)}(\tau)|
\ge
\kappa^{q(n)}2^{n^2/q(n)}.
\]

Thus the upper and lower estimates are governed by the same
centralizer scale $
2^{n^2/q(n)}.
$
We now combine these bounds to prove the main theorem.

\section{Proof of the Main Theorem}
We now combine the structural estimates from the previous section to
obtain matching upper and lower bounds for \(p_n\).
\begin{maintheorem}
\label{thm:main}
For \(n\ge4\),
\[
\kappa^{q(n)}
2^{-n^2(1-1/q(n))}
\le
p_n
\le
\frac{n}{\kappa}
2^{-n^2(1-1/q(n))+n},
\]
where
\[
\kappa=\prod_{j=1}^{\infty}(1-2^{-j}).
\]
\end{maintheorem}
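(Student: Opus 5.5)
The plan is to combine Proposition~\ref{prop:exceptional-union} with the centralizer estimates of Section~3, handling the two inequalities separately. Both require a lower bound on \(|GL_n(\mathbb F_2)|\), and an upper bound \(|GL_n(\mathbb F_2)|\le 2^{n^2}\). The lower bound comes from
\[
|GL_n(\mathbb F_2)|=2^{n^2}\prod_{j=1}^n(1-2^{-j})\ge \kappa\,2^{n^2}.
\]

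\emph{Lower bound.} By Lemma~\ref{lem:frobenius-exceptional}, \(C_{GL_n(\mathbb F_2)}(\tau)\subseteq\mathcal E_n\). Lemma~\ref{lem:invertible-proportion} and the subsequent remark give
\[
|\mathcal E_n|\ge\kappa^{q(n)}2^{n^2/q(n)}.
\]
Dividing by \(|GL_n(\mathbb F_2)|\le 2^{n^2}\) gives
\[
p_n\ge \kappa^{q(n)}2^{n^2/q(n)-n^2}=\kappa^{q(n)}2^{-n^2(1-1/q(n))},
\]
which is the claimed lower bound.

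\emph{Upper bound.} Apply a union bound to Proposition~\ref{prop:exceptional-union}. Each set \(C(T_{a,r},U_{a,r})\) is either empty or a coset of \(C_{GL_n(\mathbb F_2)}(T_{a,r})\). Its size is therefore at most \(|Z(T_{a,r})|\le 2^{n^2/q(n)}\), by Lemma~\ref{lem:uniform-centralizer} and the remark following it. The index set \(\{(a,r)\ne(1,0)\}\) has at most \(n(2^n-1)<n2^n\) elements. Hence
\[
|\mathcal E_n|\le n\,2^{n}\,2^{n^2/q(n)}.
\]
Dividing by \(|GL_n(\mathbb F_2)|\ge\kappa 2^{n^2}\) yields
\[
p_n\le \frac{n}{\kappa}2^{-n^2(1-1/q(n))+n}.
\]

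The step needing the most care is the coset claim used in the union bound. The conjugator set \(C(T,U)=\{A: A^{-1}TA=U\}\) is a left coset \(C_{GL_n(\mathbb F_2)}(T)A_0\) when it is nonempty: if \(A_0,A\in C(T,U)\), then \(AA_0^{-1}\) commutes with \(T\). This is stated in Section~3, so it only needs to be invoked. The heavy lifting (Theorem~\ref{thm:yuan} and Propositions~\ref{prop:centralizer-dimension} and \ref{prop:tau-centralizer}) is already done. The remaining work is bookkeeping of the constants \(\kappa\) and \(n2^n\), which must reproduce exactly the factors in the statement.
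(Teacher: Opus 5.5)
Your proposal is correct and follows the paper's proof step for step. For the lower bound you use \(C_{GL_n(\mathbb F_2)}(\tau)\subseteq\mathcal E_n\), the estimate \(|C_{GL_n(\mathbb F_2)}(\tau)|\ge\kappa^{q(n)}2^{n^2/q(n)}\), and \(|GL_n(\mathbb F_2)|\le 2^{n^2}\); for the upper bound you take a union bound over fewer than \(n2^n\) nontrivial pairs \((a,r)\), bound each nonempty conjugator set by \(2^{n^2/q(n)}\), and divide by \(|GL_n(\mathbb F_2)|\ge\kappa 2^{n^2}\), exactly as the paper does (the set \(C_{GL_n(\mathbb F_2)}(T)A_0\) is usually called a right coset, but that does not affect the count).
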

\begin{proof}
We first prove the upper bound. By
Proposition~\ref{prop:exceptional-union},
\[
\mathcal E_n
=
\bigcup_{(a,r)\neq(1,0)}
C(T_{a,r},U_{a,r}).
\]
Hence
\[
|\mathcal E_n|
\le
\sum_{(a,r)\neq(1,0)}
|C(T_{a,r},U_{a,r})|.
\]
Whenever \(C(T_{a,r},U_{a,r})\) is nonempty, it has the same
cardinality as \(C_{GL_n(\mathbb F_2)}(T_{a,r})\). By
Lemma~\ref{lem:uniform-centralizer},
\[
|C_{GL_n(\mathbb F_2)}(T_{a,r})|
\le
2^{n^2/q(n)}.
\]
There are
\[
n(2^n-1)-1<n2^n
\]
nontrivial pairs \((a,r)\). Therefore
\[
|\mathcal E_n|
<
n2^n\,2^{n^2/q(n)}.
\]

Since
\[
|GL_n(\mathbb F_2)|
=
2^{n^2}
\prod_{j=1}^n(1-2^{-j})
\ge
\kappa\,2^{n^2},
\]
we obtain
\[
p_n
=
\frac{|\mathcal E_n|}{|GL_n(\mathbb F_2)|}
\le
\frac{n}{\kappa}
2^{-n^2(1-1/q(n))+n}.
\]

For the lower bound, let
\[
m=\frac{n}{q(n)},
\qquad
\tau=\sigma^m.
\]
By Lemma~\ref{lem:frobenius-exceptional},
\[
C_{GL_n(\mathbb F_2)}(\tau)
\subseteq
\mathcal E_n.
\]
Hence
\[
p_n
\ge
\frac{|C_{GL_n(\mathbb F_2)}(\tau)|}
{|GL_n(\mathbb F_2)|}.
\]
By Lemma~\ref{lem:invertible-proportion},
\[
|C_{GL_n(\mathbb F_2)}(\tau)|
\ge
\kappa^{q(n)}
2^{n^2/q(n)}.
\]
Since
\[
|GL_n(\mathbb F_2)|
\le
2^{n^2},
\]
it follows that
\[
p_n
\ge
\kappa^{q(n)}
2^{-n^2(1-1/q(n))}.
\]
This proves the theorem.
\end{proof}

The two-sided estimate in Main Theorem determines the leading
quadratic exponent of the exceptional probability. Taking logarithms
gives the following asymptotic form.
\begin{corollary}
For \(n\ge4\),
\[
-\log_2 p_n
=
n^2\left(1-\frac1{q(n)}\right)+O(n).
\]
\end{corollary}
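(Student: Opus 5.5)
The plan is to take base-two logarithms of the two-sided estimate in the Main Theorem and check that every term besides the quadratic one is \(O(n)\), uniformly in \(n\ge4\).

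\emph{Lower bound on \(-\log_2 p_n\).} From the upper bound \(p_n\le \frac{n}{\kappa}2^{-n^2(1-1/q(n))+n}\) we get
\[
-\log_2 p_n\ \ge\ n^2\Bigl(1-\tfrac1{q(n)}\Bigr)-n-\log_2 n+\log_2\kappa .
\]
Here \(\kappa\) is a fixed positive constant (about \(0.2888\)), so \(\log_2\kappa\) is an absolute constant. Also \(\log_2 n\le n\). Hence the correction is at least \(-2n-C\) for an absolute constant \(C\).

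\emph{Upper bound on \(-\log_2 p_n\).} From the lower bound \(p_n\ge\kappa^{q(n)}2^{-n^2(1-1/q(n))}\) we get
\[
-\log_2 p_n\ \le\ n^2\Bigl(1-\tfrac1{q(n)}\Bigr)+q(n)\log_2(1/\kappa).
\]
Since \(q(n)\) divides \(n\), we have \(q(n)\le n\). So this correction is at most \(n\log_2(1/\kappa)<2n\).

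Combining the two displays, \(\bigl|-\log_2 p_n-n^2(1-1/q(n))\bigr|\le 2n+C'\) for all \(n\ge4\), with \(C'\) absolute, which is the stated \(O(n)\).

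The only point that needs care is the term \(\kappa^{q(n)}\). When \(n\) is prime, \(q(n)=n\), so this term is genuinely of size \(2^{-\Theta(n)}\) rather than a constant. The bound \(q(n)\le n\) is exactly what keeps this loss within \(O(n)\), and I would state it explicitly in the proof.
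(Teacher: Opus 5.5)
Your proposal is correct and follows the paper's proof: take \(-\log_2\) of both sides of the Main Theorem's two-sided bound and use \(q(n)\le n\) to absorb the \(q(n)\log_2(1/\kappa)\) term into \(O(n)\). Your explicit constants (\(\log_2 n\le n\) and \(\log_2(1/\kappa)<2\)) are a small refinement of the same argument.
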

\begin{proof}
Taking \(-\log_2\) of the two-sided bound in the Main Theorem reverses the inequalities and gives
\[
n^2\left(1-\frac1{q(n)}\right)
-n-\log_2\frac{n}{\kappa}
\le
-\log_2 p_n
\]
and
\[
-\log_2 p_n
\le
n^2\left(1-\frac1{q(n)}\right)
+q(n)\log_2\frac1\kappa.
\]
Since \(q(n)\le n\), both error terms are \(O(n)\). Therefore
\[
-\log_2 p_n
=
n^2\left(1-\frac1{q(n)}\right)+O(n).
\]
\end{proof}
\begin{remark}
The quadratic exponent is determined by the smallest
prime divisor of \(n\). If \(n\) is even, then \(q(n)=2\), and
\[
-\log_2 p_n
=
\frac{n^2}{2}+O(n).
\]
If \(n\) is prime, then \(q(n)=n\), and
\[
-\log_2 p_n
=
n^2-n+O(n)
=
n^2+O(n).
\]
\end{remark}
Thus the probability that a random outer linear transformation preserves
a nontrivial linear self-symmetry of inversion is exponentially small
in \(n^2\), with a leading exponent determined by the arithmetic of
the dimension through \(q(n)\). The matching lower bound shows that
this exponent is not an artifact of the upper-bound argument. It is already realized by the Frobenius symmetry associated with a
proper subfield of maximal size.

Table~\ref{tab:small-n} illustrates the theorem for
\(4\le n\le 8\). For \(n=4\), exhaustive enumeration gives the exact
value \(p_4=5/168\); for \(n=5,6,7,8\), we report the bounds from
Main Theorem.

\begin{table}[t]
\centering
\begin{tabular}{c|c|c|c}
$n$ & $q(n)$ & $n^2(1-1/q(n))$ & $p_n$ or theorem bounds \\
\hline
4 & 2 & 8
  & $p_4=\dfrac{5}{168}\approx 2.98\times10^{-2}$ \\[2mm]
5 & 5 & 20
  & $1.92\times10^{-9}\le p_5\le 5.28\times10^{-4}$ \\
6 & 2 & 18
  & $3.18\times10^{-7}\le p_6\le 5.07\times10^{-3}$ \\
7 & 7 & 42
  & $3.81\times10^{-17}\le p_7\le 7.05\times10^{-10}$ \\
8 & 2 & 32
  & $1.94\times10^{-11}\le p_8\le 1.65\times10^{-6}$
\end{tabular}
\caption{Small-dimensional values and bounds for the exceptional
probability \(p_n\).}
\label{tab:small-n}
\end{table}

\section{Cryptographic Applications}
\label{sec:applications}

The preceding results describe how frequently a linear self-symmetry
of inversion survives after an outer linear transformation. In this
section, we turn to concrete cryptographic applications. We first
derive a simple rigidity criterion based on the coefficient
relations arising from the stabilizer condition. The criterion depends on only two
coefficients of the linearized polynomial of the outer linear map.
We then apply it to the AES S-box. After that, we discuss how such
self-symmetries can interact with diffusion layers in iterative
constructions, with ARIA and Vision as examples.

\paragraph{A General Rigidity Criterion.}
Let \(n\ge4\), and write the linearized polynomial of an invertible
\(\mathbb F_2\)-linear map \(A\) as
\[
A(x)=\sum_{i=0}^{n-1}\lambda_i x^{2^i}.
\]

\begin{proposition}
\label{prop:rigidity-criterion}
Suppose that \(\lambda_0\lambda_1\ne0\), and set
\[
\rho=\frac{\lambda_1^2}{\lambda_0^3}.
\]
If \(\rho\) does not belong to any proper subfield of
\(\mathbb F_{2^n}\), then
\[
\operatorname{Stab}(A\circ J)=\{I\}.
\]
\end{proposition}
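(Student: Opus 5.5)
The plan is to combine the reduction of Section~2 with a coefficient comparison of linearized polynomials. Suppose \(T\in\operatorname{Stab}(A\circ J)\). As in the proof of Proposition~\ref{prop:exceptional-union}, Theorem~\ref{thm:yuan} forces \(T=T_{a,r}\) for some \(a\in\mathbb F_{2^n}^\times\) and \(0\le r<n\), with \(A^{-1}T_{a,r}A=U_{a,r}\). Equivalently,
\[
T_{a,r}\circ A=A\circ U_{a,r}.
\]
The goal is to show that this forces \((a,r)=(1,0)\), which gives \(T=I\).

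Both sides are \(\mathbb F_2\)-linear maps of \(\mathbb F_{2^n}\), so each has a unique linearized polynomial representation \(\sum_{k=0}^{n-1}c_kx^{2^k}\), with exponents reduced using \(x^{2^n}=x\). I will compute both sides explicitly:
\[
T_{a,r}(A(x))=a\sum_i\lambda_i^{2^r}x^{2^{i+r}},
\qquad
A(U_{a,r}(x))=\sum_i\lambda_i a^{-2^i}x^{2^{i+r}}.
\]
In both sums the exponent index \(i+r\) is read mod \(n\). Since \(i\mapsto i+r\) is a bijection mod \(n\), uniqueness lets me compare the coefficients of \(x^{2^{i+r}}\) term by term, which gives
\[
a\,\lambda_i^{2^r}=\lambda_i\,a^{-2^i}\quad\text{for all } i.
\]

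Next I use only \(i=0\) and \(i=1\), dividing by the nonzero \(\lambda_0\) and \(\lambda_1\). This yields
\[
a^2=\lambda_0^{1-2^r},\qquad a^3=\lambda_1^{1-2^r}.
\]
To eliminate \(a\), I compare \(a^6\) computed both ways:
\[
\lambda_0^{3(1-2^r)}=\lambda_1^{2(1-2^r)},
\qquad\text{that is,}\qquad
\rho^{2^r-1}=1.
\]
Since \(\rho\ne0\), this means \(\rho^{2^r}=\rho\). Hence \(\rho\) lies in the fixed field of \(\sigma^r\), which is \(\mathbb F_{2^{d_r}}\) with \(d_r=\gcd(n,r)\). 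If \(1\le r<n\), then \(d_r\) is a proper divisor of \(n\), so \(\rho\) lies in a proper subfield, contradicting the hypothesis. Therefore \(r=0\).

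With \(r=0\), the relation \(a^2=\lambda_0^{0}=1\) gives \(a=1\), because squaring is injective in characteristic \(2\). Thus \(T=T_{1,0}=I\), and \(\operatorname{Stab}(A\circ J)=\{I\}\).

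The main point needing care is the coefficient-comparison step. Three things must be checked there: the linearized representation is unique, the index shift by \(r\) is handled mod \(n\), and the Frobenius twist \(\lambda_i\mapsto\lambda_i^{2^r}\) lands in the correct coefficient. After that, the elimination of \(a\) through \(a^6\) is a short computation.
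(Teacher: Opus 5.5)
Your proposal is correct and follows the paper's proof essentially step for step: the reduction via Theorem~\ref{thm:yuan} to \(T_{a,r}\circ A=A\circ U_{a,r}\), the coefficient comparison giving \(a^{2^i+1}=\lambda_i^{1-2^r}\) for \(i=0,1\), the elimination of \(a\) to obtain \(\rho^{2^r}=\rho\), and then the conclusions \(r=0\) and \(a=1\). Your explicit elimination through \(a^6\) only spells out a step that the paper states in one line.
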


\begin{proof}
Let \(T\in\operatorname{Stab}(A\circ J)\). By
Theorem~\ref{thm:yuan}, there exist
\(a\in\mathbb F_{2^n}^{\times}\) and \(0\le r<n\) such that
\[
T=M_a\sigma^r
\]
and
\[
A^{-1}M_a\sigma^r A=M_{a^{-1}}\sigma^r.
\]
Equivalently,
\[
M_a\sigma^r A=A M_{a^{-1}}\sigma^r.
\]
Expanding both sides gives
\[
\sum_{i=0}^{n-1}a\lambda_i^{2^r}x^{2^{i+r}}
=
\sum_{i=0}^{n-1}\lambda_i a^{-2^i}x^{2^{i+r}}.
\]
Since \(i\mapsto i+r\pmod n\) permutes the Frobenius exponents,
comparison of coefficients gives
\[
a^{2^i+1}=\lambda_i^{\,1-2^r}
\]
for every \(i\) with \(\lambda_i\ne0\).
In particular,
\[
a^2=\lambda_0^{\,1-2^r},
\qquad
a^3=\lambda_1^{\,1-2^r}.
\]
Therefore
\[
\rho^{\,1-2^r}=1,
\]
and hence
\[
\rho^{2^r}=\rho.
\]

If \(r\ne0\), then
\[
\rho\in\mathbb F_{2^{\gcd(n,r)}},
\]
which is a proper subfield of \(\mathbb F_{2^n}\), contradicting the
hypothesis. Thus \(r=0\). The relation for \(i=0\) then gives
\(a^2=1\), so \(a=1\). Hence \(T=I\), and therefore
\[
\operatorname{Stab}(A\circ J)=\{I\}.
\]
\end{proof}
\paragraph{The AES S-box.}

We now apply Proposition~\ref{prop:rigidity-criterion} to the AES
S-box~\cite{fips197}. In the standard field representation
\[
\mathbb F_{2^8}
=
\mathbb F_2[t]/(t^8+t^4+t^3+t+1),
\]
the linear part of the AES affine transformation has linearized
polynomial
\[
A_{\rm AES}(x)
=
\sum_{i=0}^7 \lambda_i x^{2^i},
\]
with
\[
(\lambda_0,\ldots,\lambda_7)
=
(05,09,F9,25,F4,01,B5,8F).
\]

For the first two coefficients,
\[
\rho
=
\frac{\lambda_1^2}{\lambda_0^3}
=
E7.
\]
Every proper subfield of \(\mathbb F_{2^8}\) is contained in
\(\mathbb F_{2^4}\). Since
\[
E7^{15}=35\ne1,
\]
we have \(E7\notin\mathbb F_{2^4}\). Hence \(E7\) belongs to no proper
subfield of \(\mathbb F_{2^8}\).

Proposition~\ref{prop:rigidity-criterion} therefore gives
\[
\operatorname{Stab}(A_{\rm AES}\circ J_8)=\{I\}.
\]

The full AES S-box is
\[
S_{\rm AES}(x)
=
A_{\rm AES}(J_8(x))+c,
\]
where \(c=63\) in hexadecimal notation.

\begin{corollary}\label{AESSbox}
The full AES S-box has no nontrivial linear self-symmetry:
\[
\operatorname{Stab}(S_{\rm AES})=\{I\}.
\]
\end{corollary}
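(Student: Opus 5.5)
The plan is to reduce the affine statement to the linear one already established, namely \(\operatorname{Stab}(A_{\rm AES}\circ J_8)=\{I\}\). That equality was obtained just above from Proposition~\ref{prop:rigidity-criterion}, using \(\rho=E7\notin\mathbb F_{2^4}\). The only difference between \(S_{\rm AES}\) and \(\Phi_{A_{\rm AES}}=A_{\rm AES}\circ J_8\) is the additive constant \(c=63\). The first step is therefore to show that a linear map commuting with \(S_{\rm AES}\) must also commute with \(A_{\rm AES}\circ J_8\).

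Let \(T\in GL_8(\mathbb F_2)\) satisfy \(T\circ S_{\rm AES}=S_{\rm AES}\circ T\). Since \(T\) is \(\mathbb F_2\)-linear, this identity reads
\[
T\bigl(A_{\rm AES}(J_8(x))\bigr)+T(c)=A_{\rm AES}\bigl(J_8(T(x))\bigr)+c
\qquad\text{for all }x\in\mathbb F_{2^8}.
\]
I would evaluate this at \(x=0\). Using \(J_8(0)=0\), \(T(0)=0\) and \(A_{\rm AES}(0)=0\), it gives \(T(c)=c\). Substituting this back, the constants cancel on both sides, leaving
\[
T\circ A_{\rm AES}\circ J_8=A_{\rm AES}\circ J_8\circ T .
\]
Hence \(T\in\operatorname{Stab}(A_{\rm AES}\circ J_8)\), which is \(\{I\}\), so \(T=I\).

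The reverse inclusion \(I\in\operatorname{Stab}(S_{\rm AES})\) is trivial. There is no genuine obstacle here. The only point needing care is the convention \(J_8(0)=0\), which makes evaluation at \(0\) isolate the constant. The substantive input is the computation \(E7^{15}=35\neq1\), which I take from the discussion above, together with the applicability of Theorem~\ref{thm:yuan} for \(n=8\ge4\) inside Proposition~\ref{prop:rigidity-criterion}.
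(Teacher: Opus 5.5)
Your proposal is correct and is the paper's argument: evaluating \(T\circ S_{\rm AES}=S_{\rm AES}\circ T\) at \(0\) gives \(T(c)=c\), the constants cancel, and so \(T\in\operatorname{Stab}(A_{\rm AES}\circ J_8)=\{I\}\). The inputs you take as given, \(\rho=09^2/05^3=E7\) and \(E7^{15}=35\neq1\), are also correct when checked in \(\mathbb F_2[t]/(t^8+t^4+t^3+t+1)\).
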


\begin{proof}
Let \(T\in\operatorname{Stab}(S_{\rm AES})\). Evaluating
\(T\circ S_{\rm AES}=S_{\rm AES}\circ T\) at \(0\) gives
\(T(c)=c\). The constant terms therefore cancel, and
\[
T\in\operatorname{Stab}(A_{\rm AES}\circ J_8).
\]
Hence \(T=I\).
\end{proof}

Thus the AES affine layer satisfies the two-coefficient criterion, and
the resulting S-box has no nontrivial linear self-symmetry. We next
examine why such S-box-level self-symmetries can matter at the level of
an iterative cipher.

\paragraph{ARIA and Binary Diffusion.}
ARIA~\cite{lee2010aria} uses two S-boxes and their inverses. The first,
\(S_1\), is the AES S-box. The second is
\[
S_2(x)=B\bigl(x^{247}\bigr)+c,
\]
for an affine layer with linear part \(B\). Since
\[
x^{247}=x^{-8}=\sigma^3(J(x)),
\]
the nonconstant part of \(S_2\) belongs to the family considered here,
with outer linear layer \(B\circ\sigma^3\).

The diffusion layer of ARIA is a \(16\times16\) binary matrix acting
on bytes. Hence any \(\mathbb F_2\)-linear map applied identically to
each byte commutes with this diffusion layer. Therefore, a byte-wise
linear self-symmetry that survives the S-box layer is not destroyed by
the diffusion layer and may continue through the round structure if it
is also compatible with the remaining round operations.

For ARIA, S-box rigidity is therefore the natural condition that
prevents such byte-wise linear self-symmetries from propagating through
the diffusion layer. Corollary~\ref{AESSbox} already gives this rigidity
for \(S_1\).

\paragraph{A Large-Field Example: Vision.}
The Vision cipher~\cite{aly2020marvellous} uses S-boxes obtained by composing
inversion over a binary field with an \(\mathbb F_2\)-linearized affine
permutation. Writing the linear part as
\[
A(x)=b_1x+b_2x^2+b_3x^4,
\]
the transformed inverse \(A\circ J\) belongs directly to the family
studied in this paper.

This example is different from the byte-oriented setting of AES and
ARIA. The field dimension is much larger, while the outer linear map
has a sparse linearized representation. In this setting,
Proposition~\ref{prop:rigidity-criterion} gives a particularly simple rigidity
test. \(A\circ J\) is rigid whenever
\[
\frac{b_2^2}{b_1^3}
\]
lies in no proper subfield of \(\mathbb F_{2^n}\), provided that $b_1b_2 \neq 0$.

Thus the criterion is not limited to small S-boxes. It also applies
naturally to inverse-based nonlinear layers over larger binary fields,
where sparsity of the outer linearized polynomial makes the rigidity
condition especially easy to check. For such large-field constructions,
rigidity is a desirable structural property. It rules out nontrivial
linear self-symmetries inherited from finite-field inversion. This does
not by itself imply cryptographic security, but it removes one possible
source of algebraic structure that could otherwise persist through an
iterated construction.

\paragraph{Related Inverse-Based Constructions.}
The framework also applies directly to inverse-based S-boxes with
linear transformations on both sides of inversion. Suppose
\[
S=B\circ J\circ A,
\qquad A,B\in GL_n(\mathbb F_2).
\]
Then
\[
A\circ S\circ A^{-1}=(AB)\circ J,
\]
and therefore
\[
\operatorname{Stab}(S)
=
A^{-1}\operatorname{Stab}((AB)\circ J)A.
\]
Thus \(S\) is rigid if and only if \((AB)\circ J\) is rigid. In
particular, two-sided linear transformations around inversion reduce
directly to the outer-linear setting studied in this paper.

Additive constants after inversion can be handled as in
Corollary~\ref{AESSbox}. If a constant is added before inversion,
as in Camellia~\cite{rfc3713} and SM4~\cite{SM4ISO}, one must
conjugate by an affine map. A linear self-symmetry of the original
S-box is then transformed into an affine self-symmetry of a map of
the form
\[
(AB)\circ J+c.
\]
Thus the linear part is covered by the present theory, while the
additional affine condition can be checked separately.

\section{Conclusion}

We studied the linear self-symmetries of finite-field inversion after
composition with an outer invertible linear transformation. We showed
that the exceptional probability satisfies
\[
-\log_2 p_n
=
n^2\left(1-\frac1{q(n)}\right)+O(n),
\]
so the leading quadratic exponent is determined by the smallest prime
divisor \(q(n)\) of the extension degree.

The cryptographic examples illustrate how this symmetry question
appears in practice. The AES S-box satisfies a simple rigidity
criterion, ARIA shows how byte-wise symmetries can pass through binary
diffusion layers, and Vision extends the setting to larger binary
fields with sparse linearized outer maps. The same framework also extends directly to two-sided linear
transformations around inversion, covering the linear parts of related
constructions such as Camellia and SM4.

\bibliographystyle{plain}
\bibliography{reference}

@article{yuan2019inverse,
  author  = {Yuan, Feng and Jiang, Ji-Jun and Yang, Yang and Ou, Hai-Wen and Wang, Min-Juan},
  title   = {Enumeration of Cryptographic Functions Affine Equivalent to the Inverse Function Over {$F_{p^n}$}},
  journal = {Chinese Journal of Computers},
  volume  = {42},
  number  = {5},
  pages   = {1126--1136},
  year    = {2019},
  doi     = {10.11897/SP.J.1016.2019.01126}
}

@techreport{fips197,
  author      = {{National Institute of Standards and Technology}},
  title       = {Advanced Encryption Standard (AES)},
  institution = {National Institute of Standards and Technology},
  type        = {Federal Information Processing Standards Publication},
  number      = {197-upd1},
  year        = {2023},
  doi         = {10.6028/NIST.FIPS.197-upd1}
}

@inproceedings{biryukov2003toolbox,
  author    = {Biryukov, Alex and De Canni{\`e}re, Christophe
               and Braeken, An and Preneel, Bart},
  title     = {A Toolbox for Cryptanalysis:
               Linear and Affine Equivalence Algorithms},
  booktitle = {Advances in Cryptology -- EUROCRYPT 2003},
  series    = {Lecture Notes in Computer Science},
  volume    = {2656},
  pages     = {33--50},
  publisher = {Springer},
  year      = {2003},
  doi       = {10.1007/3-540-39200-9_3}
}

@inproceedings{ranea2021self,
  author    = {Ranea, Adrian and Preneel, Bart},
  title     = {On Self-Equivalence Encodings in White-Box Implementations},
  booktitle = {Selected Areas in Cryptography -- SAC 2020},
  series    = {Lecture Notes in Computer Science},
  volume    = {12804},
  pages     = {639--669},
  publisher = {Springer},
  year      = {2021},
  doi       = {10.1007/978-3-030-81652-0_25}
}

@article{rijmen2008rotation,
  author  = {Rijmen, Vincent and Barreto, Paulo S. L. M.
             and Gazzoni Filho, D{\'e}cio L.},
  title   = {Rotation Symmetry in Algebraically Generated
             Cryptographic Substitution Tables},
  journal = {Information Processing Letters},
  volume  = {106},
  number  = {6},
  pages   = {246--250},
  year    = {2008},
  doi     = {10.1016/j.ipl.2007.09.012}
}

@article{rosenthal2003polynomial,
  author  = {Rosenthal, Joachim},
  title   = {A Polynomial Description of the {Rijndael}
             Advanced Encryption Standard},
  journal = {Journal of Algebra and Its Applications},
  volume  = {2},
  number  = {2},
  pages   = {223--236},
  year    = {2003},
  doi     = {10.1142/S0219498803000532}
}

@article{liu2005rijndael,
  author  = {Liu, Jingmei and Wei, Baodian and Cheng, Xiangguo
             and Wang, Xinmei},
  title   = {Cryptanalysis of {Rijndael} {S-box} and Improvement},
  journal = {Applied Mathematics and Computation},
  volume  = {170},
  number  = {2},
  pages   = {958--975},
  year    = {2005},
  doi     = {10.1016/j.amc.2004.12.043}
}

@article{kolsch2021ccz,
  author  = {Lukas K{\"o}lsch},
  title   = {On {CCZ}-Equivalence of the Inverse Function},
  journal = {IEEE Transactions on Information Theory},
  volume  = {67},
  number  = {7},
  pages   = {4856--4862},
  year    = {2021},
  doi     = {10.1109/TIT.2021.3065068}
}

@misc{ishizuka2026stabilizers,
  author = {Keita Ishizuka},
  title  = {Almost All Vectorial Functions Have Trivial Extended-Affine Stabilizers},
  year   = {2026},
  note   = {arXiv:2602.06668. \url{https://arxiv.org/abs/2602.06668}}
}

@article{kavut2012rotation,
  author  = {Kavut, Sel{\c{c}}uk},
  title   = {Results on Rotation-Symmetric {S}-Boxes},
  journal = {Information Sciences},
  volume  = {201},
  pages   = {93--113},
  year    = {2012},
  doi     = {10.1016/j.ins.2012.02.030}
}

@misc{omland2022rotation,
  author = {Omland, Tron and St\u{a}nic\u{a}, Pantelimon},
  title  = {Permutation Rotation-Symmetric {S}-Boxes, Liftings and Affine Equivalence},
  year   = {2022},
  note   = {arXiv:2203.00778}
}

@inproceedings{barkan2002rijndael,
  author    = {Barkan, Elad and Biham, Eli},
  title     = {In How Many Ways Can You Write {Rijndael}?},
  booktitle = {Advances in Cryptology -- ASIACRYPT 2002},
  series    = {Lecture Notes in Computer Science},
  volume    = {2501},
  pages     = {160--175},
  publisher = {Springer},
  year      = {2002},
  doi       = {10.1007/3-540-36178-2_10}
}

@article{carlet1998codes,
  author  = {Carlet, Claude and Charpin, Pascale and Zinoviev, Victor},
  title   = {Codes, Bent Functions and Permutations Suitable for {DES}-Like Cryptosystems},
  journal = {Designs, Codes and Cryptography},
  volume  = {15},
  number  = {2},
  pages   = {125--156},
  year    = {1998},
  doi     = {10.1023/A:1008344232130}
}

@article{canteaut2019ccz,
  author  = {Canteaut, Anne and Perrin, L{\'e}o},
  title   = {On {CCZ}-Equivalence, Extended-Affine Equivalence, and Function Twisting},
  journal = {Finite Fields and Their Applications},
  volume  = {56},
  pages   = {209--246},
  year    = {2019},
  doi     = {10.1016/j.ffa.2018.11.008}
}

@article{chen2021affine,
  author  = {Chen, Xi and Qu, Longjiang and Fu, Shaojing and Li, Chao},
  title   = {The Number of Affine Equivalent Classes and Extended Affine Equivalent Classes of Vectorial Boolean Functions},
  journal = {Discrete Applied Mathematics},
  volume  = {289},
  pages   = {477--491},
  year    = {2021},
  doi     = {10.1016/j.dam.2020.10.002}
}

@techreport{lee2010aria,
  author      = {Lee, Jaechul and Lee, Jung Hee and Kim, Jong-Sung
                 and Kwon, Daesung and Kim, Changhoon},
  title       = {A Description of the ARIA Encryption Algorithm},
  institution = {Internet Engineering Task Force},
  number      = {RFC 5794},
  year        = {2010}
}

@article{aly2020marvellous,
  author  = {Aly, Abdelrahaman and Ashur, Tomer and Ben-Sasson, Eli
             and Dhooghe, Siemen and Szepieniec, Alan},
  title   = {Design of Symmetric-Key Primitives for Advanced
             Cryptographic Protocols},
  journal = {IACR Transactions on Symmetric Cryptology},
  year    = {2020},
  volume  = {2020},
  number  = {3},
  pages   = {1--45},
  doi     = {10.13154/tosc.v2020.i3.1-45}
}

@techreport{SM4ISO,
  author      = {{International Organization for Standardization}},
  title       = {Information Technology---Security Techniques---Encryption Algorithms---Part 3: Block Ciphers---Amendment 1: SM4},
  institution = {ISO/IEC},
  number      = {18033-3:2010/Amd 1:2021},
  year        = {2021}
}

@techreport{rfc3713,
  author      = {Matsui, M. and Nakajima, J. and Moriai, S.},
  title       = {A Description of the Camellia Encryption Algorithm},
  institution = {Internet Engineering Task Force},
  type        = {RFC},
  number      = {3713},
  year        = {2004}
}

@inproceedings{leander2011invariant,
  author    = {Gregor Leander and Mohamed Ahmed Abdelraheem and
               Hoda AlKhzaimi and Erik Zenner},
  title     = {A Cryptanalysis of {PRINTcipher}: The Invariant Subspace Attack},
  booktitle = {Advances in Cryptology -- CRYPTO 2011},
  series    = {Lecture Notes in Computer Science},
  volume    = {6841},
  publisher = {Springer},
  year      = {2011}
}

\appendix 
\section{Proof of Proposition~\ref{prop:centralizer-dimension}}

To prove Proposition~\ref{prop:centralizer-dimension}, we first isolate
the finite-field equation that governs each coefficient of a commuting
linearized polynomial.

\begin{lemma}
\label{lem:hilbert90-space}
Let \(0\le r<n\), let \(d_r=\gcd(n,r)\), and let
\(\eta\in\mathbb F_{2^n}^{\times}\). Set
\[
V_\eta
=
\{u\in\mathbb F_{2^n}:u^{2^r}=\eta u\}.
\]
Then \(V_\eta\) contains a nonzero element if and only if
\[
N_{\mathbb F_{2^n}/\mathbb F_{2^{d_r}}}(\eta)=1.
\]
When this condition holds, \(V_\eta\) is one-dimensional over
\(\mathbb F_{2^{d_r}}\), and hence
\[
\dim_{\mathbb F_2}V_\eta=d_r.
\]
Otherwise, \(V_\eta=\{0\}\).
\end{lemma}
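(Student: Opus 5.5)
The plan is to treat \(V_\eta\) as the fixed-point space of a semilinear operator and apply Hilbert's Theorem 90 for the cyclic extension \(\mathbb F_{2^n}/\mathbb F_{2^{d_r}}\). Throughout, write \(K=\mathbb F_{2^{d_r}}\) and \(N=N_{\mathbb F_{2^n}/K}\).

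\textbf{Step 1: \(V_\eta\) is a \(K\)-subspace.} If \(c\in K\), then \(c^{2^r}=c\) because \(d_r\mid r\). Hence \((cu)^{2^r}=c\,\eta u=\eta(cu)\) for \(u\in V_\eta\). The defining equation is also additive. So \(V_\eta\) is a \(K\)-subspace.

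\textbf{Step 2: \(\dim_K V_\eta\le 1\).} Let \(u,v\in V_\eta\) be nonzero. Then \((v/u)^{2^r}=v/u\), so \(v/u\) lies in the fixed field of \(\sigma^r\). That fixed field is \(\mathbb F_{2^{\gcd(n,r)}}=K\). Therefore \(v\in Ku\). If \(V_\eta\neq0\), it follows that \(\dim_{\mathbb F_2}V_\eta=d_r\).

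\textbf{Step 3: a nonzero element forces \(N(\eta)=1\).} The group \(\mathrm{Gal}(\mathbb F_{2^n}/K)\) is cyclic of order \(n/d_r\). Since \(\gcd(r,n)=d_r\), the automorphism \(\phi=\sigma^r\) generates it. Suppose \(u\neq0\) satisfies \(\phi(u)=\eta u\). Iterating gives \(\phi^k(u)=\eta\,\phi(\eta)\cdots\phi^{k-1}(\eta)\,u\). Taking \(k=n/d_r\), the left side is \(u\) and the product is \(N(\eta)\). Hence \(N(\eta)=1\).

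\textbf{Step 4: \(N(\eta)=1\) gives a nonzero element.} Hilbert 90 in multiplicative form, for the cyclic generator \(\phi\), gives \(\eta=\phi(w)/w\) for some \(w\neq0\). Then \(\phi(w)=\eta w\), so \(w\in V_\eta\).

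This last step is the only substantive one. To avoid citing Hilbert 90, it can be proved by counting instead. The map \(w\mapsto \phi(w)/w\) is a homomorphism on \(\mathbb F_{2^n}^\times\) with kernel \(K^\times\). Its image therefore has size \((2^n-1)/(2^{d_r}-1)\). The norm map onto \(K^\times\) is surjective, so its kernel has the same size. The image is contained in the kernel of \(N\), since \(N(\phi(w)/w)=1\). Equal sizes then give equality, which proves Step 4.

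The main care needed is ensuring that \(\sigma^r\) generates the Galois group over \(K\), rather than over the fixed field of \(\sigma^{\gcd}\) computed differently. This holds because \(\langle\sigma^r\rangle=\langle\sigma^{d_r}\rangle\) in the cyclic group of order \(n\). The case \(r=0\) (with \(d_r=n\), \(N=\mathrm{id}\)) is consistent: \(V_\eta\) is nonzero iff \(\eta=1\), in which case \(V_\eta=\mathbb F_{2^n}\).
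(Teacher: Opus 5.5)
Your proposal is correct and follows the paper's proof: you show \(V_\eta\) is an \(\mathbb F_{2^{d_r}}\)-subspace, use that a ratio \(u/u_0\) of nonzero solutions is fixed by \(\sigma^r\) to get dimension at most one, and characterize existence via Hilbert's Theorem~90 for the generator \(\sigma^r\) of \(\operatorname{Gal}(\mathbb F_{2^n}/\mathbb F_{2^{d_r}})\). The only difference is that you prove both directions of Hilbert~90 yourself, using the norm telescoping for necessity and a count comparing the image of \(w\mapsto w^{2^r-1}\) with the kernel of the norm for sufficiency, whereas the paper simply cites it.
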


\begin{proof}
If \(r=0\), then \(d_r=n\), and the equation becomes
\[
u=\eta u.
\]
Thus \(V_\eta=\mathbb F_{2^n}\) when \(\eta=1\), and
\(V_\eta=\{0\}\) otherwise, which agrees with the stated norm condition.

Now assume \(1\le r<n\). Since \(d_r=\gcd(n,r)\), the fixed field of
\(\sigma^r\) is \(\mathbb F_{2^{d_r}}\). Moreover, \(\sigma^r\) has
order \(n/d_r\) and generates
\[
\operatorname{Gal}
(\mathbb F_{2^n}/\mathbb F_{2^{d_r}}).
\]
Hence \(V_\eta\) is an \(\mathbb F_{2^{d_r}}\)-subspace of
\(\mathbb F_{2^n}\). If \(u_0,u\in V_\eta\) are nonzero, then
\[
\left(\frac{u}{u_0}\right)^{2^r}
=
\frac{u}{u_0},
\]
so \(u/u_0\in\mathbb F_{2^{d_r}}\). Therefore, whenever \(V_\eta\)
contains a nonzero element, it is one-dimensional over
\(\mathbb F_{2^{d_r}}\).

It remains to determine when a nonzero solution exists. For \(u\ne0\),
\[
u^{2^r}=\eta u
\quad\Longleftrightarrow\quad
\eta=u^{2^r-1}.
\]
By the finite-field form of Hilbert's Theorem~90, this occurs if and
only if
\[
N_{\mathbb F_{2^n}/\mathbb F_{2^{d_r}}}(\eta)=1.
\]
Therefore, when the norm condition holds,
\[
\dim_{\mathbb F_2}V_\eta=d_r,
\]
and otherwise \(V_\eta=\{0\}\).
\end{proof}
We now apply Lemma~\ref{lem:hilbert90-space} coefficientwise to determine
the centralizer algebra \(Z(T_{a,r})\).

\begin{proof}[Proof of Proposition~\ref{prop:centralizer-dimension}]
Every \(\mathbb F_2\)-linear map
\(L:\mathbb F_{2^n}\to\mathbb F_{2^n}\) has a unique linearized
polynomial representation
\[
L(x)=\sum_{i=0}^{n-1}\ell_i x^{2^i},
\qquad
\ell_i\in\mathbb F_{2^n}.
\]
Since
\[
T_{a,r}(x)=a x^{2^r},
\]
we obtain
\[
L\circ T_{a,r}(x)
=
\sum_{i=0}^{n-1}
\ell_i a^{2^i} x^{2^{i+r}},
\]
whereas
\[
T_{a,r}\circ L(x)
=
\sum_{i=0}^{n-1}
a\ell_i^{2^r} x^{2^{i+r}}.
\]
Because addition by \(r\) permutes the Frobenius exponents modulo \(n\),
uniqueness of the linearized polynomial representation gives
\[
L\circ T_{a,r}=T_{a,r}\circ L
\quad\Longleftrightarrow\quad
\ell_i^{2^r}
=
a^{2^i-1}\ell_i
\qquad (0\le i<n).
\]

Thus the centralizer condition reduces to \(n\) independent equations,
one for each coefficient \(\ell_i\).
Applying Lemma~\ref{lem:hilbert90-space} with
\(\eta=a^{2^i-1}\), the \(i\)-th coefficient admits a nonzero solution
if and only if
\[
N_{\mathbb F_{2^n}/\mathbb F_{2^{d_r}}}
\bigl(a^{2^i-1}\bigr)=1.
\]
Using multiplicativity of the norm and the definition of $z_{a,r},$
we obtain
\[
N_{\mathbb F_{2^n}/\mathbb F_{2^{d_r}}}
\bigl(a^{2^i-1}\bigr)
=
N_{\mathbb F_{2^n}/\mathbb F_{2^{d_r}}}(a)^{2^i-1}
=
z_{a,r}^{2^i-1}.
\]
Hence the norm condition is equivalent to
\[
z_{a,r}^{2^i-1}=1,
\]
or equivalently,
\[
z_{a,r}^{2^i}=z_{a,r}.
\]

Now \(e_{a,r}=[\mathbb F_2(z_{a,r}):\mathbb F_2]\), so
\(z_{a,r}^{2^i}=z_{a,r}\) holds exactly when
\(e_{a,r}\mid i\). Moreover, \(z_{a,r}\in\mathbb F_{2^{d_r}}\),
so \(e_{a,r}\mid d_r\), and hence \(e_{a,r}\mid n\). Therefore
exactly \(n/e_{a,r}\) indices \(i\in\{0,\ldots,n-1\}\) satisfy the
norm condition.

For each such index, Lemma~\ref{lem:hilbert90-space} gives a
\(d_r\)-dimensional \(\mathbb F_2\)-solution space for \(\ell_i\).
For all remaining indices, only \(\ell_i=0\) is possible. Since the
coefficient equations are independent,
\[
\dim_{\mathbb F_2} Z(T_{a,r})
=
\frac{n}{e_{a,r}}\,d_r
=
\frac{n\,d_r}{e_{a,r}}.
\]
\end{proof}

\section{Proof of Proposition~\ref{prop:tau-centralizer}}

Recall that
\[
m=\frac{n}{q(n)},
\qquad
\tau=\sigma^m,
\]
so that \(\tau\) has order \(q(n)\) and fixes
\(\mathbb F_{2^m}\). Let
\[
R=\mathbb F_2[\langle\tau\rangle].
\]
The action of \(\langle\tau\rangle\) on \(\mathbb F_{2^n}\) gives
\(\mathbb F_{2^n}\) a natural \(R\)-module structure.

\begin{proof}[Proof of Proposition~\ref{prop:tau-centralizer}]
Consider the Galois extension
$
\mathbb F_{2^n}/\mathbb F_{2^m}.
$
Its Galois group is
\[
\operatorname{Gal}
\bigl(\mathbb F_{2^n}/\mathbb F_{2^m}\bigr)
=
\langle\tau\rangle,
\]
which has order \(q(n)\). By the Normal Basis Theorem,
\(\mathbb F_{2^n}\) is a free module of rank one over
$
\mathbb F_{2^m}[\langle\tau\rangle].
$
Equivalently,
\[
\mathbb F_{2^n}
\cong
\mathbb F_{2^m}[\langle\tau\rangle]
\]
as \(\mathbb F_{2^m}[\langle\tau\rangle]\)-modules.

Since
$
R=\mathbb F_2[\langle\tau\rangle]
$
and \(\mathbb F_{2^m}\) has dimension \(m\) over \(\mathbb F_2\),
we have, as \(R\)-modules,
\[
\mathbb F_{2^m}[\langle\tau\rangle]
\cong
\mathbb F_{2^m}\otimes_{\mathbb F_2}R
\cong
R^m.
\]
Hence
\[
\mathbb F_{2^n}\cong R^m
\]
as \(R\)-modules.

Now let
\[
Z(\tau)
=
\{L\in
\operatorname{End}_{\mathbb F_2}(\mathbb F_{2^n})
:
L\tau=\tau L\}.
\]
Because \(R\) is generated over \(\mathbb F_2\) by the action of
\(\tau\), an \(\mathbb F_2\)-linear map commutes with \(\tau\) if and
only if it is \(R\)-linear. Therefore
\[
Z(\tau)
=
\operatorname{End}_R(\mathbb F_{2^n}).
\]
Using the \(R\)-module isomorphism above,
\[
Z(\tau)
\cong
\operatorname{End}_R(R^m)
\cong
M_m(R).
\]
Here \(R\) is commutative, since \(\langle\tau\rangle\) is cyclic.

Finally,
\[
C_{GL_n(\mathbb F_2)}(\tau)
=
Z(\tau)^\times,
\]
because the group centralizer consists precisely of the invertible
\(\mathbb F_2\)-linear maps in \(Z(\tau)\). Under the isomorphism
\[
Z(\tau)\cong M_m(R),
\]
the invertible elements correspond to the invertible \(m\times m\)
matrices over \(R\). Thus
\[
C_{GL_n(\mathbb F_2)}(\tau)
\cong
GL_m(R).
\]
\end{proof}

\end{document}